\title{{Optimal Selling Time of a Stock under Capital Gains Taxes}}
\author{C. K\"uhn\footnote{Institut f\"ur Mathematik, Goethe-Universit\"at Frankfurt, D-60054 Frankfurt a.M., Germany, email: 
\{ckuehn,ulbricht\}@math.uni-frankfurt.de} \quad B. A. Surya\footnote{School of Business and Management, Bandung Institute of Technology, Bandung, West Java, Indonesia, email: 
budhi.surya@sbm-itb.ac.id. B. A. Surya was visiting Goethe University  
when part of this work was carried out. Financial support 
of the Deutscher Akademischer Austausch Dienst~(DAAD) is gratefully 
acknowledged.} \quad B. Ulbricht\footnotemark[1]{}}
\newtheorem{theorem}{Theorem}[section]
\newtheorem{theo}[theorem]{Theorem}
\newtheorem{lem}[theorem]{Lemma}
\newtheorem{prop}[theorem]{Proposition}
\newtheorem{Rem}[theorem]{Remark}
\renewcommand{\tilde}{\widetilde}
\newcommand{\exit}{{\mbox{\, \vspace{3mm}}} \hfill\mbox{$\square$}}
\def\bbr{{\Bbb R}}   
\def\wt{\widetilde}
\newcommand{\beao}{\begin{eqnarray*}}
\newcommand{\eeao}{\end{eqnarray*}\noindent}
\newcommand{\beam}{\begin{eqnarray}}
\newcommand{\eeam}{\end{eqnarray}\noindent} 
\newcommand{\eps}{\varepsilon}
\numberwithin{equation}{section}
\begin{document}
\maketitle \pagestyle{myheadings} \markboth{C. K\"uhn, B. A. Surya,
and B. Ulbricht} {Optimal Selling Time of a Stock under Capital Gains Taxes}
\begin{abstract}
We investigate the impact of capital gains taxes on optimal investment decisions in a quite simple model.
Namely, we consider a risk neutral investor who owns one risky stock from which she assumes that it has a lower expected return than the riskless 
bank account and determine the optimal stopping time at which she sells the stock to invest the proceeds in the bank account up to the maturity date. 
In the case of linear taxes and a positive riskless interest rate,
the problem is nontrivial because at the selling time the investor has to realize book profits which triggers tax payments. 
We derive a boundary that is continuous and increasing in time and decreasing in the volatility of the stock such that the investor sells the stock at the first time 
its price is smaller or equal to this boundary.

\medskip

\textbf{Keywords}: Capital gains taxes, tax-timing option, optimal stopping, free-boundary
problem

\textbf{JEL classification}: C61, G11, H20 

\end{abstract}


\section{Introduction}

In practice, capital gains taxes are undoubtedly one of the most relevant market frictions.
In most countries, an important feature of capital gains taxes
is the rule that profits are taxed when the asset is liquidated, i.e., the gain is realized, and not when gains actually occur.
Thus, even in the simplest case of a {\em linear} taxing rule (that we consider in the current paper),
there is a nontrivial interrelation between creating trading gains and tax liabilities by dynamic investment strategies.
The investor can influence the timing of the tax payments, i.e.,
she holds a deferral option. In the case of a positive riskless interest rate, there is some incentive to realize profits as late as possible, but this can be
at odds with portfolio regroupings in order to earn higher returns before taxes.

Solving a portfolio optimization problem with taxes allowing for arbitrary continuous time trading strategies is a rather daunting task, especially
for the so-called exact tax basis and the first-in-first-out priority rule.  Namely, shares having the same price but being purchased at different times possess, in general, 
different book profits, and hence, their liquidation triggers different tax payments. Allowing the investor to choose which of the shares is relevant for
taxation is called exact tax basis. The book profits of the shares in the portfolio become an infinite-dimensional state variable (cf. Jouini, Koehl, and 
Touzi~\cite{jouini1, jouini2} for the first-in-first-out priority rule and \cite{kuehn.ulbricht} for the exact tax basis). 

Dybvig and Koo~\cite{dyb1} and DeMiguel and Uppal~\cite{dem1} model the exact tax basis in discrete time and relate the portfolio optimization problem to nonlinear programming.
Jouini, Koehl, and Touzi~\cite{jouini2,jouini1} consider the first-in-first-out priority rule with one nondecreasing asset price, but with a quite general tax code,
and derive first-order conditions for the optimal consumption
problem. The problem consists of injecting cash from the income stream into the single asset and withdrawing it for consumption. 
Ben Tahar, Soner, and Touzi~\cite{bentaharsonertouzi10,bentaharsonertouzi07} solve the Merton problem
with proportional transaction costs and a tax based on the average of past purchasing prices.
This approach has the advantage that the optimization problem is Markovian with the only one-dimensional tax basis
as additional state variable. 
Cadenillas and Pliska~\cite{cadenillas1999optimal} and Buescu, Cadenillas, and Pliska~\cite{buescu2007note} maximize the long-run growth rate of investor's wealth
in a model with taxes and transaction costs. Here, after each portfolio regrouping, the investor has to pay capital gains taxes for her total portfolio. 

In practise, an investor is usually interested in much simpler optimization problems. Therefore, we want to analyze a typical and analytically quite tractable investment decision problem 
to determine exemplarily the impact of capital gains taxes and to see how model parameters, as the volatility of the stock, enter into the solution. Often, the investor wants to maximize her
trading profits within a certain finite period of time by exchanging
one asset for another one only once. This means that she has to solve an optimal stopping problem.
In this simple setting, different tax basises, as, e.g., the exact tax basis, the average tax basis, or the first-in-first-out priority rule, coincide.
To investigate the impact of taxes on investment decisions, 
we look at an investor owning an asset which she would sell immediately to buy another one if she was {\em not} subject to taxation. 
%
%
%
Under risk neutrality, this just means that the
asset the investor holds at the beginning has a lower expected return than the alternative asset. Then, we investigate to what extent she is prevented from this transaction
by the obligation to pay taxes at the time she liquidates the first asset. The price of the first asset is modeled as {\em stochastic} process in the Black-Scholes market
to see the influence of the volatility on the
deferral option the tax payer holds. We prove the plausible supposition that the possibility to time the tax payments is more worthwhile for holders of more volatile assets
and, consequently, the risky asset is sold later (see Proposition~\ref{2.12.2014.1}).
We assume that the second asset is then kept in the portfolio up to maturity. Thus, it is no essential restriction to model it as a riskless bank account.

In contrast to the classic model of  Constantinides~\cite{constantinides.1983}, we do {\em not} assume that the investor can both defer the tax payments and divest the stock from 
her portfolio at the same time by trading in a market for short sell contracts (see Subsection~4.1 of \cite{constantinides.1983}).
To our mind, it is mainly an interesting gedankenexperiment to shorten the stock, instead of selling it, in order to defer tax payments, but under real-world 
tax legislation, it is no option for private investors. By assuming the existence of such a market for short sell contracts, 
Constantinides can price the timing option in the Black-Scholes model by no-arbitrage arguments and without solving a {\em free}-boundary problem.
Another essential difference to \cite{constantinides.1983} is that we have a deterministic finite time horizon, whereas in
\cite{constantinides.1983} liquidation is forced at independent Poisson times. Thus, as in problems with infinite time horizon, in the latter article, one gets rid of `time' 
as a state variable.  

In this paper, some standard techniques from the theory of optimal stopping are used, especially an approach that turns the problem with a terminal payoff to one with
a {\em running} payoff, see
Peskir and Shiryaev~\cite{peskir}. The objective function is much simpler than in other recent papers on the optimal selling time of a stock without taxes, as
Shiryaev, Xu, and Zhou~\cite{Shiryaev}, where the stock is sold at the stopping time which maximizes the
expected ratio between the stock price and its maximum over the entire horizon. Du Toit and Peskir \cite{duToit} complement this by determining a stopping time that minimizes 
the expected ratio of the ultimate maximum and the current stock price. Dai and Zhong~\cite{Dai} consider a
similar problem in which the average stock price is used as
reference. In addition to the above selling problems, in their
recent work, Baurdoux et al. \cite{Baurdoux} discuss a `buy low and
sell high' problem as sequential optimal stopping of a Brownian bridge
modeling stock pinning. This is a phenomenon where a stock price tends to end up in the vicinity
of the strike of its option near its expiry, see \cite{Avellaneda} for a detailed explanation.

The paper is organized as follows. In Section~\ref{sec:probform},
we formulate the optimal stopping problem and present its solution (Theorem~\ref{theo:main2}), which is, accompanied by
Proposition~\ref{2.12.2014.1}, the main result of the paper. Afterwards, the results are related to other contributions in the literature,
and the tax-timing option is valued (Remark~\ref{30.12.2014.1}).
Section~\ref{sec:method} introduces the applied method to solve the problem and prepares the proofs which
are given in  Section~\ref{sec:proof}.

\section{Formulation of the Stopping Problem and Main Results}\label{sec:probform}
Consider a filtered probability space $(\Omega,\mathcal{F},(\mathcal{F}_t)_{t\in[0,T]},\mathbb{P})$, $T\in\bbr_+$,
generated by a one-dimensional standard Brownian motion~$(B_t)_{t\in[0,T]}$. 
The investment oppurtunities consist of a bank account with
continuously compounded fixed interest rate $r>0$ and a stock whose price process~$(X_t)_{t\in[0,T]}$ solves the SDE
\begin{equation}\label{eq:stock}
dX_t=\mu X_t dt + \sigma X_t dB_t, \quad t\geq 0,
\end{equation}
with $X_0=x>0$, $\mu\in\bbr$ and $\sigma>0$. 
At time~$0$, the investor holds one risky stock. It was purchased sometime in the past at price~$P_0>0$.
This means that already at time~$0$, the stock possesses the book profit~$X_0-P_0$. The economically interesting case is $X_0-P_0>0$, but this need not be assumed.
The investor can sell the stock 
at any time up to the end of the time horizon~$T$. At the selling time~$t$, the investor has to pay the capital gains taxes~$\alpha(X_t-P_0)$, 
where $\alpha\in[0,1)$ is the given tax rate, i.e., if $X_t<P_0$, the investor gets a tax credit. Then, the remaining wealth~$X_t - \alpha(X_t-P_0) = (1-\alpha)X_t + \alpha P_0$ 
is invested in the riskless bank account. At maturity~$T$, the portfolio is liquidated anyway. 
As the bank account pays a continuous compounded interest rate, we assume that taxes
also charge the account continuously. This corresponds to the taxation of a continuous dividend flow and leads to the after-tax interest rate~$(1-\alpha)r$.
Thus, the investor's wealth at maturity when selling the stock at time $t\in[0,T]$ at price~$\wt{x}$ is
\begin{equation}\label{eq:payoff}
G(t,\wt{x})=\big[(1-\alpha)\wt{x}+\alpha P_0\big]e^{r(1-\alpha)(T-t)},\quad (t,\wt{x})\in[0,T]\times\mathbb{R}_+.
\end{equation}
Maximizing investor's expected wealth at maturity leads to the optimal stopping problem
\begin{equation}\label{eq:osp_at0}
V(x):=\sup_{\tau\in\mathcal{T}_{[0,T]}}\mathbb{E}\left[G(\tau,X_{\tau})\right],
\end{equation}
where $X_0=x$, and by $\mathcal{T}_{[0,T]}$ we denote the set of
$(\mathcal{F}_{s})_{s\in[0,T]}-$stopping times taking values in $[0,T]$. The assumption that the second investment opportunity is a riskless bank account 
rather than another risky asset makes the payoff function
a bit more tractable and is, given that the investor is risk neutral and cannot change her position again before $T$, not very restrictive.

Because of the strong Markov property of $(X_t)_{t\in[0,T]}$, we define the value function associated with problem (\ref{eq:osp_at0}) by 
\begin{equation}\label{eq:osp}
V(t,x):=\sup_{\tau\in\mathcal{T}_{[0,T-t]}}\mathbb{E}
\big[G(t+\tau,X^{t,x}_{t+\tau})\big],
\end{equation}
where $(X^{t,x}_s)_{s\in[t,T]}$ is the unique solution of (\ref{eq:stock}) with initial condition $X_t^{t,x}=x$ and $\mathcal{T}_{[0,T-t]}$ denotes the
set of $(\mathcal{F}_{t+s})_{s\in[0,T-t]}-$stopping times taking values in
$[0,T-t]$. Note that $V(x)=V(0,x)$ as $X^{0,x}=X$ with $X_0=x$.
By setting $\tau=0$ in (\ref{eq:osp}), it is clear that
\begin{equation}\label{eq:majorant}
V(t,x)\geq  G(t,x) \quad \textrm{for} \;\;
(t,x)\in[0,T]\times\mathbb{R}_+.
\end{equation}
In addition, we have that
\begin{equation}\label{eq:BC}
V(T,x)=G(T,x)\quad \mbox{for}\ x\in\mathbb{R}_+.
\end{equation}
The continuation region is defined by
$${\mathcal{C}}:=\{(t,x)\in[0,T)\times\mathbb{R}_+\ |\ V(t,x)>G(t,x)\}$$
and the stopping region by
$${\mathcal{S}}:=\{(t,x)\in[0,T)\times\mathbb{R}_+\ |\ V(t,x)=G(t,x)\}.$$
\begin{prop}\label{prop:continue}
$(t,x)\mapsto V(t,x)$ is continuous on
$[0,T]\times\mathbb{R}_+$. In addition, for any $(t,x)\in[0,T]\times \bbr_+$, the stopping time
\begin{equation}\label{eq:ost}
\tau_{t,x}:=\inf\{s\in[0,T-t)\ |\ (t+s,X^{t,x}_{t+s})\in \mathcal{S}\}\wedge (T-t)
\end{equation}
maximizes (\ref{eq:osp}).
\end{prop}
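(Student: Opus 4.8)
The plan is to recognize that this is a finite-horizon optimal stopping problem for a time-homogeneous strong Markov process (geometric Brownian motion) with a continuous gain function $G$, and to extract both assertions from the standard theory developed in Peskir and Shiryaev~\cite{peskir}. Concretely, I would isolate two ingredients: (i) continuity of $V$ on $[0,T]\times\mathbb{R}_+$, which is the substantive part, and (ii) optimality of the first-entry time into $\mathcal{S}$, which reduces to the Snell-envelope characterization once continuity is available. Throughout I would use the explicit solution $X^{t,x}_s=x\,E^{t}_s$ with $E^t_s:=\exp\big((\mu-\tfrac{\sigma^2}{2})(s-t)+\sigma(B_s-B_t)\big)$, together with the integrability bound $\mathbb{E}\big[\sup_{0\le s\le T}X^{0,x}_s\big]<\infty$, standard for geometric Brownian motion, which immediately yields $\mathbb{E}\big[\sup_{0\le s\le T}|G(s,X^{0,x}_s)|\big]<\infty$ since the time factor in $G$ is bounded by $e^{r(1-\alpha)T}$ and $G$ is affine in its space variable.

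For continuity I would treat the two variables separately. Spatial continuity is in fact Lipschitz and uniform in $t$: the multiplicative structure $X^{t,x}_s=x\,E^t_s$ gives, for a fixed stopping time, $|G(t+\tau,X^{t,x}_{t+\tau})-G(t+\tau,X^{t,x'}_{t+\tau})|=(1-\alpha)\,|x-x'|\,E^t_{t+\tau}\,e^{r(1-\alpha)(T-t-\tau)}$, so using $|\sup a_\tau-\sup b_\tau|\le\sup|a_\tau-b_\tau|$ one obtains $|V(t,x)-V(t,x')|\le (1-\alpha)\,e^{r(1-\alpha)T}\,\mathbb{E}\big[\sup_{0\le s\le T}E^0_s\big]\,|x-x'|$ with a finite constant independent of $t$. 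Continuity in $t$ is the main obstacle. Here I would first exploit time-homogeneity to write $V(t,x)=\widehat V(T-t,x)$ with $\widehat V(u,x):=\sup_{\tau\le u}\mathbb{E}\big[\big((1-\alpha)Y_\tau+\alpha P_0\big)e^{r(1-\alpha)(u-\tau)}\big]$ for a generic geometric Brownian motion $Y$ started at $x$, reducing matters to continuity of $u\mapsto\widehat V(u,x)$. Monotonicity ($\widehat V$ nondecreasing in $u$) is immediate, since any $\tau\le u_1$ is admissible for $u_2\ge u_1$ and the nonnegative integrand is only multiplied by $e^{r(1-\alpha)(u_2-u_1)}\ge1$. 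For the reverse estimate I would take a near-optimal $\tau^{\ast}\le u_2$ for $\widehat V(u_2,x)$, truncate it to $\tau^{\ast}\wedge u_1$ as a competitor for $\widehat V(u_1,x)$, and control the difference by splitting on $\{\tau^{\ast}\le u_1\}$ and $\{\tau^{\ast}>u_1\}$; on the first event the gap is the bounded factor $(e^{r(1-\alpha)(u_2-u_1)}-1)$ times an integrable quantity, and on the second it is dominated by $\mathbb{E}\big[\sup_{u_1\le s\le u_2}|Y_s-Y_{u_1}|\big]$ plus a similar bounded factor. Both tend to $0$ as $u_2\downarrow u_1$ by dominated convergence, using path-continuity of $Y$ and integrability of $\sup_{s\le T}Y_s$; joint continuity then follows by combining the uniform spatial Lipschitz bound with continuity in $t$.

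With $V$ continuous, $V-G$ is continuous and hence $\mathcal{S}=\{V=G\}$ is closed, so $\tau_{t,x}$ is a genuine stopping time as the first entry time of the continuous process $s\mapsto(t+s,X^{t,x}_{t+s})$ into a closed set, capped at $T-t$ (the cap absorbs the terminal slice, where $V(T,\cdot)=G(T,\cdot)$ forces stopping). For optimality I would invoke the general theory: under continuity of $G$ and the integrability bound above, the value process $s\mapsto V(t+s,X^{t,x}_{t+s})$ is the Snell envelope of $s\mapsto G(t+s,X^{t,x}_{t+s})$, i.e.\ the smallest right-continuous supermartingale dominating the gain, and the first time the Snell envelope meets the gain is optimal. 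Since $\{V(t+s,X^{t,x}_{t+s})=G(t+s,X^{t,x}_{t+s})\}=\{(t+s,X^{t,x}_{t+s})\in\mathcal{S}\}$, this first time is precisely $\tau_{t,x}$, which is therefore a maximizer of (\ref{eq:osp}).
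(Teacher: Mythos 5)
Your proposal is correct, and its skeleton matches the paper's proof: a spatial Lipschitz bound that is uniform in $t$ (your estimate is essentially the paper's inequality (\ref{eq:devVx}), with the same constant up to rearrangement of the factor $e^{\mu s}$), then continuity in $t$ split into an easy monotone direction and a harder reverse estimate, and finally an appeal to the standard Markovian theory of Peskir and Shiryaev~\cite{peskir} for optimality of the first entry time into $\mathcal{S}$ (the paper cites Theorem~2.4 there, with no more detail than you give). The one genuinely different step is the reverse time estimate. The paper stays inside its running-payoff framework from Section~\ref{sec:method}: it bounds $V(s,x)-V(t,x)$ by the discount-factor discrepancy in $G$ plus the extra running payoff $\mathbb{E}\left[\int_{T-(t-s)}^T F\left(u,\sup_{v\in[s,T]}X^{s,x}_v\right)\vee 0\,du\right]$ earned on the time window that the longer problem has and the shorter one lacks, exploiting monotonicity of $F$ in $x$. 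You instead work directly with the terminal payoff: after reducing to a horizon-parametrized problem $\widehat V(u,x)$ by time-homogeneity, you truncate a near-optimal $\tau^{\ast}$ for the longer horizon to $\tau^{\ast}\wedge u_1$ and control the loss on $\{\tau^{\ast}>u_1\}$ by the path modulus of continuity $\mathbb{E}\big[\sup_{u_1\le s\le u_2}|Y_s-Y_{u_1}|\big]$ together with the bounded discount factor. Both estimates are valid and of comparable length; the paper's version reuses the function $F$ that it needs anyway for the rest of the analysis, while yours is self-contained, requires no It\^o decomposition, and, as you note, uses only affine linearity and positivity of the payoff plus integrability of the running supremum, so it transfers unchanged to more general payoffs of this type.
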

Of course, the proof follows more or less directly from the standard theory. Thus, it is only briefly sketched in Section~\ref{sec:proof}.
The following theorem states the main results of this paper and is also proved in Section~\ref{sec:proof}.
\begin{theo}\label{theo:main2}
Consider problem (\ref{eq:osp}) with stopping region
$\mathcal{S}$. Let $r\geq0$ and $\alpha\in[0,1)$. Then, 
\begin{itemize}
\item[(a)] there exists a continuous, increasing boundary $b:[0,T)\rightarrow\mathbb{R}_+$
such that the stopping region is given by
\begin{equation}\label{eq:boundary}
\mathcal{S} =\left\{
\begin{array}{ll}
[0,T)\times \mathbb{R}_+ & \textrm{if}\text{ } \mu\leq (1-\alpha)r \\
\{(t,x)\in[0,T)\times\mathbb{R}_+\ |\ x\leq b(t)\} &
\textrm{if}~ \mu>(1-\alpha)r,\\
\end{array}
\right.
\end{equation}
where for all $t\in[0,T)$, the equivalence $\alpha>0 \Leftrightarrow b(t)>0$ holds. \\
The boundary satisfies the terminal condition 
\begin{align*}
\lim_{t\uparrow T} b(t)=\frac{r\alpha P_0}{\mu-r(1-\alpha)}=:f.
\end{align*}
\item[(b)] if $\alpha>0$ and $\mu>(1-\alpha)r$, the value function satisfies the smooth fit condition at the boundary, i.e., 
\begin{equation}\label{eq:smoothfit}
\partial_xV(t,x)=\partial_xG(t,x)=(1-\alpha)e^{r(1-\alpha)(T-t)}\quad \textrm{at $x=b(t)$.}
\end{equation}
\end{itemize}
\end{theo}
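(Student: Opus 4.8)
The plan is to turn the terminal-payoff problem into a running-payoff problem in the spirit of Peskir and Shiryaev and to read off the geometry of $\mathcal{S}$ from the sign of the running reward. Writing $\mathcal{L}=\partial_t+\mu x\partial_x+\frac12\sigma^2x^2\partial_{xx}$ for the generator of $(t,X_t)$ and applying It\^o's formula to $G(t+u,X^{t,x}_{t+u})$ (the stochastic integral is a true martingale because $\partial_xG$ is bounded and $X$ has finite moments), I would first establish
\begin{equation*}
V(t,x)-G(t,x)=\sup_{\tau\in\mathcal{T}_{[0,T-t]}}\mathbb{E}\left[\int_0^\tau \mathcal{L}G(t+u,X^{t,x}_{t+u})\,du\right].
\end{equation*}
A direct computation gives $\mathcal{L}G(t,x)=(1-\alpha)e^{r(1-\alpha)(T-t)}\big[(\mu-r(1-\alpha))x-r\alpha P_0\big]$, whose sign is governed entirely by the affine bracket. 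If $\mu\le(1-\alpha)r$ the bracket is $\le0$ for every $x\in\mathbb{R}_+$, so the integral is nonpositive for all $\tau$, the supremum is attained at $\tau=0$, and $V=G$ on all of $[0,T)\times\mathbb{R}_+$; this is the first line of \eqref{eq:boundary}. From now on assume $\mu>(1-\alpha)r$, so that the bracket vanishes exactly at $x=f$ and $\mathcal{L}G$ is strictly positive for $x>f$ and strictly negative for $x<f$.

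Setting $c:=r(1-\alpha)$, $k:=\mu-r(1-\alpha)>0$, and using the scaling $X^{t,x}_{t+u}=x\,M_u$ with $M_u:=\exp((\mu-\frac12\sigma^2)u+\sigma B_u)$ independent of $x$, the exponential pulls out of the integral and I would record the factorization
\begin{equation*}
V(t,x)-G(t,x)=(1-\alpha)e^{c(T-t)}\,W(t,x),\qquad W(t,x):=\sup_{\tau\in\mathcal{T}_{[0,T-t]}}\mathbb{E}\left[\int_0^\tau e^{-cu}\big(k\,xM_u-r\alpha P_0\big)\,du\right].
\end{equation*}
Since the integrand is increasing in $x$ and $u\mapsto xM_u$ is pathwise increasing in $x$, each expectation, and hence $W$, is nondecreasing in $x$; as $W\ge0$ with equality characterising $\mathcal{S}$, the stopping region is a down-set in $x$, i.e.\ $\mathcal{S}=\{x\le b(t)\}$ with $b(t):=\sup\{x:\ W(t,x)=0\}$. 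Because $W(t,x)>0$ whenever $x>f$ (there $\mathcal{L}G>0$, so waiting a little in a small neighbourhood is strictly profitable), we get $b(t)\le f$. Crucially, after the factorization $W$ depends on $t$ only through the horizon constraint $\tau\le T-t$: enlarging $t$ shrinks the admissible set $\mathcal{T}_{[0,T-t]}$, so $W(\cdot,x)$ is nonincreasing in $t$, the zero-set grows, and $b$ is increasing.

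Writing the inner expectation as $kxA(\tau)-r\alpha P_0B(\tau)$ with $A(\tau):=\mathbb{E}[\int_0^\tau e^{-cu}M_u\,du]$ and $B(\tau):=\mathbb{E}[\int_0^\tau e^{-cu}\,du]$, the threshold acquires the explicit form $b(t)=f\cdot\inf_{\tau}B(\tau)/A(\tau)$, the infimum over nontrivial $\tau\le T-t$. For $\alpha=0$ one has $f=0$ and thus $b\equiv0$; for $\alpha>0$ (and $r>0$, the economically relevant case, since $r=0$ forces $f=0$) one has $f>0$, and I would deduce $b(t)>0$ from strict positivity of this infimum. The intuition is that as $x\downarrow0$ the discounted upside $kxA(\tau)\le kx\bar A$ is uniformly small while the recoverable tax credit keeps the bleeding term $r\alpha P_0B(\tau)$ relevant, so immediate stopping wins for small $x$. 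The terminal value is read off from the same formula: as $t\uparrow T$ only vanishingly short $\tau$ remain admissible, over which $M_u\to1$ uniformly and $B(\tau)/A(\tau)\to1$, whence $\lim_{t\uparrow T}b(t)=f$. Continuity of $b$ I would prove in two steps: right-continuity is immediate from closedness of $\mathcal{S}$ together with monotonicity, while left-continuity---which I expect to be the main obstacle---requires ruling out a downward jump of the continuation region; I would handle it by the standard contradiction argument, in which a hypothetical interval of boundary points approached from earlier times through $\mathcal{C}$ is shown to be incompatible with the sign of $\mathcal{L}G$ and the continuity of $V$ established in Proposition~\ref{prop:continue}.

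For part (b), the inequality $\partial_x^+V(t,b(t))\ge\partial_xG(t,b(t))$ is immediate because $V-G\ge0$ vanishes at $b(t)$ and is nondecreasing to its right. For the reverse inequality I would run the classical smooth-fit argument: starting from $(t,b(t)+\eps)$, compare the value to the reward obtained by waiting until the price first drops to $b(t)$, Taylor-expand $G$ in the space variable, and let $\eps\downarrow0$; regularity of the boundary point for the nondegenerate diffusion (as $\sigma>0$, the process started at $b(t)$ immediately visits both $\mathcal{C}$ and $\mathcal{S}$) yields $\partial_x^+V(t,b(t))\le\partial_xG(t,b(t))$, and combining the two gives \eqref{eq:smoothfit}. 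Besides left-continuity of $b$, the step I expect to be genuinely delicate is the rigorous justification of $b(t)>0$, i.e.\ strict positivity of $\inf_\tau B(\tau)/A(\tau)$, since naive $L^p$ or change-of-measure bounds on $A/B$ are not uniform in $\tau$ and one must really exploit the finiteness of $T-t$.
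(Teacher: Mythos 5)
Your running-payoff reduction, the sign analysis of $F=\mathcal{L}G$, the down-set structure of $\mathcal{S}$ in $x$, the monotonicity of $b$ via shrinking horizons, and the bound $b\le f$ all match the paper's Case 1 and Case 3, Steps 1--2, and your factorization $b(t)=f\,\inf_\tau B(\tau)/A(\tau)$ is a clean reformulation that the paper does not make explicit; your smooth-fit argument is also essentially the paper's (there the key input is Lemma \ref{lemma:approxtau}, i.e.\ stability of the optimal stopping times). However, the two places you flag as delicate are genuine gaps, and they share one cause. For strict positivity $b(t)>0$ when $\alpha>0$: your intuition that ``$kxA(\tau)\le kx\bar A$ is uniformly small while $r\alpha P_0B(\tau)$ stays relevant'' cannot be closed as stated, because $\inf_\tau B(\tau)=0$ (take $\tau\equiv\varepsilon$), so no fixed $x>0$ satisfies $kx\bar A\le r\alpha P_0B(\tau)$ for all nontrivial $\tau$; what is needed is a uniform bound on the ratio $A(\tau)/B(\tau)$, and Cauchy--Schwarz-type estimates only yield $A(\tau)\le C\sqrt{B(\tau)}$, which degenerates exactly in the short-stopping-time regime. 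The paper's missing idea is a bootstrap by contradiction that removes this regime: if $b(t^*)=0$ for some $t^*\in(0,T)$, then by the monotonicity of $b$ (which you have) $b\equiv0$ on $[0,t^*]$, and since geometric Brownian motion never reaches $0$, the \emph{optimal} stopping time from Proposition \ref{prop:continue} satisfies $\tau_{0,x}\ge t^*$ for every $x>0$. This restores a uniform lower bound $B(\tau_{0,x})\ge\int_0^{t^*}e^{-cu}du>0$ while $A(\tau_{0,x})\le\bar A$, so letting $x\downarrow0$ the expected running reward becomes strictly negative, contradicting optimality (stopping immediately gives $0$). Without this self-referential use of the candidate boundary, the direct statement $\inf_\tau B(\tau)/A(\tau)>0$ is, as you suspected, not accessible by the tools you list.

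The terminal condition has the same defect: $M_u\to1$ does \emph{not} hold uniformly in any sense strong enough to give $\sup_{\tau\le T-t}A(\tau)/B(\tau)\to1$, since over continuation events $\{u<\tau\}$ of small probability the ratio $\mathbb{E}[M_u1_{\{u<\tau\}}]/\mathbb{P}(u<\tau)$ can be driven far from $1$ by charging the upper tail of $M_u$; this is the identical obstruction as in the positivity step, so your argument for $\lim_{t\uparrow T}b(t)=f$ does not go through as written. The paper instead derives the terminal condition from the same machinery as left-continuity of $b$: assuming $b(T-)<f$, one has $V(T,\cdot)=G(T,\cdot)$ on $[b(T-),f]$ while $F<0$ there; upper bounds on $\partial_tV$ and $\partial_xV$ in a left neighborhood (obtained from the stability Lemmas \ref{lemma:approxtau} and \ref{lemma:approxtau2}, not merely from continuity of $V$), combined with the PDE $\partial_tV+\mu x\partial_xV+\frac{\sigma^2}{2}x^2\partial_{xx}V=0$ in $\mathcal{C}$, force $\partial_{xx}V\ge C>0$ on a strip, and the Newton--Leibniz formula then contradicts $V=G$ with $G$ affine in $x$. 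Your left-continuity sketch points at the right contradiction but should be fleshed out with exactly these derivative bounds; ``continuity of $V$'' alone is not a sufficient input to run it.
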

\begin{figure}[ht!]
\begin{center}
\epsfig{file=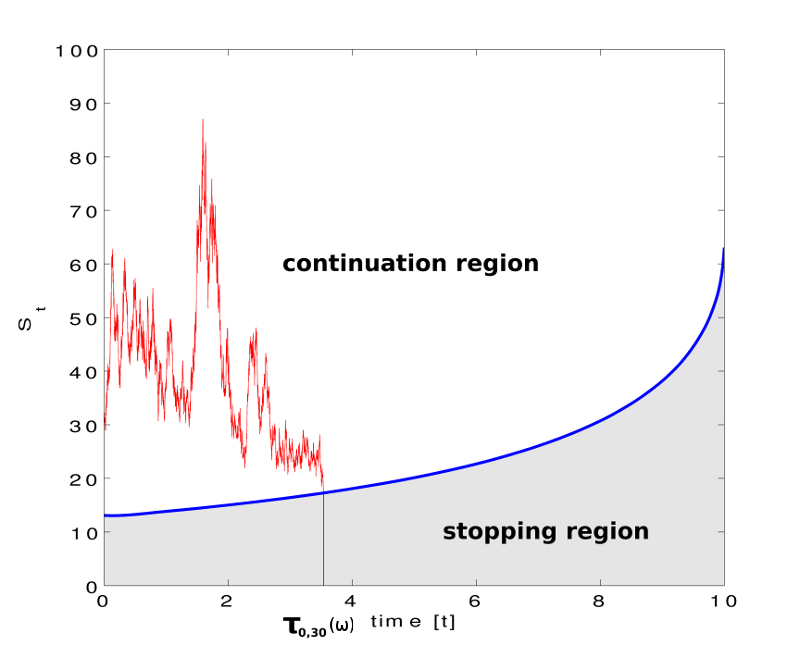, width=0.685\textwidth}
\caption{Plot of the stopping boundary in the optimization problem~(\ref{eq:osp_at0}) with time horizon~$T=3$ years. 
The other parameters are $\alpha=0.3$, $\mu=0.026$, $r=0.03$, $\sigma=0.25$, $P_0=100$, and $x=180$. The boundary lies far above the purchasing price~$P_0$, i.e.,
the stock is sold with positive book profits.}\label{fig:stoppingboundary}
\end{center}
\end{figure}

\begin{Rem}
We are primarily interested in the case that $\mu<r$, i.e., an investor who is not subject to taxation would immediatelly sell the stock.
Then, Theorem~\ref{theo:main2} tells us that the solution is nontrivial if $\mu\in ((1-\alpha)r,r)$. Choosing the midpoint of the interval for $\mu$ and reasonable
values  for $\sigma$, $T$, and $\alpha$, numerical calculations show that the boundary is far above the purchasing price~$P_0$, see Figure~\ref{fig:stoppingboundary}. 
This means that the investor always sells the stock with positive book profits. Then, it is evident that there is no 
incentive to buy the stock back at any later time and to repeat the game (namely, a renewed investment starts with zero book profit). This justifies the modeling of the decision 
problem as a simple optimal stopping problem.

On the other hand, if the stock is sold with negative book profits, the modeling is justified when wash-sales are forbidden, as, e.g., to some extent in the U.S. 
This means that the investor may sell the stock to realize the trading loss, but then she is not allowed to buy back the stock and has to take the other investment opportunity. 
Under the ban on wash sales, the investor may switch to the bank account even in the case~$\mu\ge r$,
just in order to realize losses prematurely, which leads to a nontrivial solution of the stopping problem (see Theorem~\ref{theo:main2}).
\end{Rem} 

\begin{prop}\label{2.12.2014.1}
(i) The value function is increasing in the volatility of the stock, i.e., $V^{\sigma_1}(t,x)\le V^{\sigma_2}(t,x)$ for all $0\le \sigma_1\le \sigma_2$, $t\in[0,T]$, and $x\in\bbr_+$. 
Consequently, $b^{\sigma_2}(t)\le b^{\sigma_1}(t)$ for all $\mu>(1-\alpha)r$.\\

(ii) For $\sigma=0$, the exercise boundary reads
\beam\label{17.12.2014.1}
b(t)=\frac{\alpha P_0\left(e^{r(1-\alpha)(T-t)}-1\right)}{(1-\alpha)\left(e^{\mu(T-t)}-e^{r(1-\alpha)(T-t)}\right)},
\eeam
and the optimal stopping time~(\ref{eq:ost}) is given by
\begin{equation*}
\tau_{t,x} =\left\{
\begin{array}{ll}
0 & \textrm{if}\text{ } x\le b(t)\\
T-t & \textrm{if }x>b(t).\\
\end{array}
\right.
\end{equation*}
\end{prop}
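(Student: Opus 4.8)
The plan is to treat the two parts separately: part~(i) rests on a convexity argument for $V^\sigma$ in $x$ combined with a comparison of the generators $\mathcal{L}^\sigma=\mu x\partial_x+\tfrac12\sigma^2x^2\partial_{xx}$, while part~(ii) exploits that for $\sigma=0$ the problem degenerates to a one-dimensional deterministic optimization.

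For part~(i), I would first record that for each fixed $\sigma$ the map $x\mapsto V^\sigma(t,x)$ is convex. This uses the scaling of geometric Brownian motion: writing $X^{t,x}_{t+s}=x\,\xi_s$ with $\xi_s:=\exp\big((\mu-\sigma^2/2)s+\sigma(B_{t+s}-B_t)\big)$ independent of $x$, the set of admissible stopping times in $\mathcal{T}_{[0,T-t]}$ does not depend on $x$, and for each fixed such $\tau$ the map $x\mapsto\mathbb{E}[G(t+\tau,x\xi_\tau)]$ is affine, since $G$ is affine in its spatial argument. Hence $V^\sigma(t,\cdot)$, a supremum of affine functions, is convex, so $\partial_{xx}V^\sigma\ge0$ in the distributional sense.

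The heart of part~(i) is then to show, for $\sigma_1\le\sigma_2$, that $V^{\sigma_2}$ is a superharmonic majorant of $G$ for the $\sigma_1$-problem. Since $V^{\sigma_2}\ge G$ is trivial, it remains to check $(\partial_t+\mathcal{L}^{\sigma_1})V^{\sigma_2}\le0$. Writing $\mathcal{L}^{\sigma_1}=\mathcal{L}^{\sigma_2}-\tfrac12(\sigma_2^2-\sigma_1^2)x^2\partial_{xx}$ and combining the superharmonicity $(\partial_t+\mathcal{L}^{\sigma_2})V^{\sigma_2}\le0$ with the convexity $\partial_{xx}V^{\sigma_2}\ge0$, the claim follows from $\sigma_2^2-\sigma_1^2\ge0$. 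As $V^{\sigma_1}$ is the smallest superharmonic majorant of $G$, this gives $V^{\sigma_1}\le V^{\sigma_2}$. The boundary ordering is then immediate: on $\mathcal{S}^{\sigma_2}=\{x\le b^{\sigma_2}(t)\}$ one has $G\le V^{\sigma_1}\le V^{\sigma_2}=G$, so $\mathcal{S}^{\sigma_2}\subseteq\mathcal{S}^{\sigma_1}$ and hence $b^{\sigma_2}(t)\le b^{\sigma_1}(t)$. The main obstacle is regularity: by the smooth fit of Theorem~\ref{theo:main2}(b), $V^{\sigma_2}$ is only $C^1$ across the boundary curve (and $C^{1,2}$ off it), so the generator inequality must be read weakly. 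I would make it rigorous by applying the It\^o--Tanaka formula to $s\mapsto V^{\sigma_2}(t+s,X^{t,x,\sigma_1}_{t+s})$, whose drift is $\le0$ (the boundary's contribution to $\partial_{xx}V^{\sigma_2}$ being a nonnegative measure by convexity); the resulting supermartingale property yields $V^{\sigma_2}(t,x)\ge\mathbb{E}[G(t+\tau,X^{t,x,\sigma_1}_{t+\tau})]$ for every $\tau$, and thus $V^{\sigma_2}\ge V^{\sigma_1}$.

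For part~(ii), with $\sigma=0$ the state process is deterministic, $X^{t,x}_{t+s}=xe^{\mu s}$, so any expected payoff is dominated by $\sup_{u\in[0,T-t]}g(u)$ with $g(u):=G(t+u,xe^{\mu u})$, and this value is attained by the deterministic time $\tau\equiv u^\ast$; the problem reduces to maximizing $g$ over $[0,T-t]$. Differentiating gives
\[
g'(u)=(1-\alpha)e^{r(1-\alpha)(T-t-u)}\big[xe^{\mu u}(\mu-r(1-\alpha))-r\alpha P_0\big],
\]
whose bracket, in the relevant regime $\mu>r(1-\alpha)$, is strictly increasing in $u$. Hence $g$ is either monotone or first decreasing then increasing, so its maximum on $[0,T-t]$ is attained at an endpoint, which is exactly the bang--bang rule $\tau_{t,x}\in\{0,T-t\}$. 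Comparing endpoint values, $g(T-t)-g(0)$ is affine in $x$ with positive slope $(1-\alpha)\big(e^{\mu(T-t)}-e^{r(1-\alpha)(T-t)}\big)$; solving $g(0)=g(T-t)$ for $x$ yields precisely the boundary~(\ref{17.12.2014.1}), and the sign of $g(T-t)-g(0)$ gives $\tau_{t,x}=T-t$ for $x>b(t)$ and $\tau_{t,x}=0$ for $x\le b(t)$. Here there is no serious obstacle beyond justifying the reduction to deterministic times for the degenerate diffusion.
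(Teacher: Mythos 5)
Your proposal is correct in substance, and part (ii) essentially coincides with the paper's argument, but your part (i) takes a genuinely different route, so a comparison is in order.

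For part (i), the paper's proof is purely probabilistic and needs none of the free-boundary regularity theory: it realizes the $\sigma_2$-stock on a richer space as
$X_s=x\exp\big(\mu s+\sigma_1 B_s-\tfrac{\sigma_1^2}{2}s+\sqrt{\sigma_2^2-\sigma_1^2}\,\widetilde B_s-\tfrac{\sigma_2^2-\sigma_1^2}{2}s\big)$
with $B,\widetilde B$ independent Brownian motions, identifies $V^{\sigma_2}$ with the stopping problem over $(\mathcal{F}^{B,\widetilde B}_s)$-stopping times, restricts to $(\mathcal{F}^{B}_s)$-stopping times (which can only lower the value), and then shows by conditioning on $\mathcal{F}^B_T$ that for any such $\tau$ the payoff expectation collapses to the $\sigma_1$-expectation, because $G$ is affine in $X_\tau$ and the stochastic exponential of $\sqrt{\sigma_2^2-\sigma_1^2}\,\widetilde B$ has conditional expectation $1$. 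Interestingly, both proofs hinge on the affinity of $G$: the paper for this conditioning trick, you for the convexity of $V^{\sigma_2}(t,\cdot)$ as a supremum of affine functions. Your route (generator comparison $\mathcal{L}^{\sigma_1}=\mathcal{L}^{\sigma_2}-\tfrac12(\sigma_2^2-\sigma_1^2)x^2\partial_{xx}$ plus the smallest-superharmonic-majorant characterization) is the classical analytic proof of volatility monotonicity for convex payoffs and is viable here without circularity, since Theorem~\ref{theo:main2} is proved independently of Proposition~\ref{2.12.2014.1}; but it buys the conclusion at the price of the full regularity package: $V^{\sigma_2}\in\mathcal{C}^{1,2}$ off the boundary, smooth fit, the fact that $F\le 0$ on $\mathcal{S}^{\sigma_2}$ (i.e.\ $b^{\sigma_2}\le f$, which is Remark~\ref{rem:conti_region}), a change-of-variable formula with local time on the boundary curve, and separate (trivial) treatment of the degenerate cases $\mu\le(1-\alpha)r$ and $\alpha=0$ where no boundary or smooth fit exists. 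One slip you should fix: your parenthetical justifying the drift inequality by ``the boundary's contribution to $\partial_{xx}V^{\sigma_2}$ being a nonnegative measure by convexity'' is backwards. In the It\^o--Tanaka/local-time formula, a positive atom of $\partial_{xx}V^{\sigma_2}(t,\cdot)$ at $b^{\sigma_2}(t)$ enters the drift of $s\mapsto V^{\sigma_2}(t+s,X^{t,x,\sigma_1}_{t+s})$ with a \emph{positive} sign (local-time term), which would destroy, not preserve, the supermartingale property; what rescues the argument is that smooth fit, Theorem~\ref{theo:main2}(b), forces this atom to vanish, so the local-time term is absent. With that correction your supermartingale argument closes, and the boundary ordering then follows in both proofs the same way, via $\mathcal{S}^{\sigma_2}\subseteq\mathcal{S}^{\sigma_1}$.

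For part (ii), your reduction to the deterministic maximization of $g(u)=G(t+u,xe^{\mu u})$, endpoint optimality (the paper checks $g''>0$, you check that the bracket in $g'$ is strictly increasing --- the same fact), and the indifference equation for $b(t)$ reproduce the paper's proof. The only point to make explicit is that $\tau_{t,x}$ in (\ref{eq:ost}) is defined as the hitting time of $\mathcal{S}$, so for $x>b(t)$ you must additionally verify that $xe^{\mu u}$ stays strictly above $b(t+u)$ for all $u<T-t$; this follows from the same single-crossing property (interior values of $g$ lie strictly below the larger endpoint value), which is exactly the one-sentence argument the paper gives at the end.
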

Proposition~\ref{2.12.2014.1} that is proved in Section~\ref{sec:proof} makes sense from an economic point of view. For a more volatile asset, the option to time the tax payments has a
higher value for investors. {\em This means that capital gains taxes can even motivate investors to take more risk.} 
Of course, the extent of this effect depends on the riskless interest rate~$r$ and vanishes for $r=0$. 

Seifried~\cite{seifried.2010} solves the utility maximization problem for terminal wealth with $r=0$ (i.e., it can be assumed that taxes are paid at maturity), 
but there are {\em no tax credits}. In the model of \cite{seifried.2010}, there can appear two opposite effects. Roughly speaking, if the drift of the stock is low, the tax may  
prevent the investor to buy the risky stock at all because, without negative taxes on losses, the expected after-tax gain becomes negative (literally, this 
only holds for buy-and-hold strategies 
in the stock). On the other hand, if 
the expected return is high enough, the investor may buy even more risky stocks than in the same situation without taxes. To make the latter plausible,
consider a one-period binary model for the stock and a utility function that is linear around the initial wealth and satiable at some higher level of wealth.
For the optimal stock position, the investor's terminal wealth coincides with the saturation point if the stock price goes up. This means that, in the case of taxes, 
the investor buys even more risky stocks to offset the part of the gains that she has to pay to the government.  
\cite{seifried.2010} derives that for the Black-Scholes model with CRRA investors and realistic tax rates, the overall strategy effect of taxes is negligible (see Figure~8 therein).

\begin{Rem}[Value of the tax-timing option]\label{30.12.2014.1}
Assume that in our model, the book profit~$x-P_0$ is taxed at time $0$, and later gains on the stock are taxed immediately when they occur. Thereby, we assume that
tax payments are financed by reducing the stock position, and tax rebates are reinvested in stocks. Then, the wealth in stocks satisfies the SDE 
\beao
dX_t = (1-\alpha)X_t\left(\mu\,dt + \sigma\,dB_t\right),\quad t\ge 0\quad\mbox{with\ }X_0=(1-\alpha)x + \alpha P_0
\eeao
and thus $E[X_T]=\left[(1-\alpha)x + \alpha P_0\right]e^{\mu(1-\alpha)T}$. The optimal stopping problem degenerates: for $\mu\le r$, it is optimal to sell the stock at time~$0$, and for  
$\mu\ge r$, it is optimal to sell at time~$T$.

Thus, one may interpret
\beao
e^{-r(1-\alpha)T}\left(V(0,x)-\left[(1-\alpha)x + \alpha P_0\right]e^{(1-\alpha)\max(\mu,r) T}\right) 
\eeao
as the time~$0$ value of the {\em timing option} of the stock holder, i.e., the value of the right of the investor to influence the 
timing of the tax payments. By Proposition~\ref{2.12.2014.1}, this value is increasing in the volatility of the stock. 
This is in line with the results in Constantinides~\cite{constantinides.1983}, where in a complete market model, including a 
market for short sell contracts, the price of the timing option, of course differently defined (see Equation~(21) therein), is also increasing in the volatility of the stock 
(see Table~III). 
\end{Rem}

\section{Method of solution}\label{sec:method}
To solve problem (\ref{eq:osp}), we make use of a standard method in the optimal stopping theory, see, e.g., Peskir and Shiryaev~\cite{peskir}, where
the terminal payoff is turned into a running payoff. Namely, thanks to the smoothness property of the payoff
function $G$, we can apply It\^o's formula to obtain the
following decomposition of the payoff process $(G(t+s,X_{t+s}^{t,x}))_{s\geq 0}$:
\begin{equation}\label{eq:decom}
G\big(t+s,X_{t+s}^{t,x}\big)=G(t,x)+\int_0^sF\big(t+u,X_{t+u}^{t,x}\big)du+\mathcal{M}_s,
\end{equation}
where $\mathcal{M}_s=\sigma \int_0^s X_{t+u}^{t,x} \partial_xG(t+u,X_{t+u}^{t,x})dB_u$ is a
square integrable $(\mathcal{F}_{t+s})_{s\in[0,T-t]}-$ martingale with zero expectation, and $F(t,x)$ is given by
\begin{equation}\label{eq:payoffH}
\begin{split}
F(t,x)=&e^{r(T-t)(1-\alpha)}(1-\alpha)\left(-r\alpha P_0+x\big[\mu-r(1-\alpha)\big]\right).
\end{split}
\end{equation}
By (\ref{eq:decom}), $V$ from (\ref{eq:osp}) can be written as
\begin{equation}\label{eq:osp_new}
V(t,x)=G(t,x)+\sup_{\tau
\in\mathcal{T}_{[0,T-t]}}\mathbb{E}\left[\int_0^{\tau}F(t+u,X_{t+u}^{t,x})du\right].
\end{equation}
We have that 
\begin{align}\label{fgeq0}
F >(<)\;0~\mbox{on the set}~\{(t,x)\in[0,T]\times\mathbb{R}_+\ |\ x
>(<)\;f\},
\end{align} 
with
\begin{equation}\label{eq:funcf}
f:=\frac{r\alpha P_0}{\mu-r(1-\alpha)}.
\end{equation}
This means that the sign of $F(t,x)$ does not depend on $t$.
\begin{Rem}\label{rem:conti_region}
For $(t,x)$ with $x>f$, the stopping time $\tau:=\inf\big\{s\in[0,T-t]\ |\ X^{t,x}_{t+s}\le f\big\}\wedge (T-t)$ is strictly positive, and we conclude from
(\ref{eq:osp_new}) that $V(t,x)\geq G(t,x)+
\mathbb{E}\big\{\int_0^{\tau}F(t+u,X_{t+u}^{t,x})du\big\}>G(t,x),$ and thus $(t,x)\in\mathcal{C}$.
\end{Rem}

We have $f\geq0$ and
\begin{equation}\label{fxgeq0}
\mu > (1-\alpha)r\quad \implies \partial_x F(t,x)=e^{r(T-t)(1-\alpha)}(1-\alpha)\big(\mu-r(1-\alpha)\big)>0
\end{equation}
for all $(t,x)\in[0,T]\times\mathbb{R}_+$.
Furthermore,
\begin{equation}\label{ftleq0}
\partial_t F(t,x)=-r(1-\alpha)^2 e^{r(T-t)(1-\alpha)}\left(-r\alpha P_0+x\big[\mu-r(1-\alpha)\big]\right)>0
\end{equation}
for $x<f$, and
\begin{equation}\label{ftgeq0}
\partial_t F(t,x)<0\qquad\mbox{for $x>f$.}
\end{equation}

\section{Proofs}\label{sec:proof}
\subsection{Proof of Proposition \ref{prop:continue}}
For all $x\le y$, one has
\begin{eqnarray}
0&\leq &V(t,y)-V(t,x)\notag\\
&\leq&\sup_{\tau\in\mathcal{T}_{[0,T-t]}}\Big\{\mathbb{E}\left[G\big(t+\tau,X_{t+\tau}^{t,y}\big)\right]-\mathbb{E}\left[G\big(t+\tau,X_{t+\tau}^{t,x}\big)\right]
\Big\}\notag\\
&\leq & (1-\alpha)e^{(\mu+r)T} \mathbb{E}\left[\sup_{0\leq s\leq T-t}\exp\left(\sigma B_s - \frac12 \sigma^2 s\right)\right](y-x)\notag\\
& \le & C (y-x)
\label{eq:devVx}
\end{eqnarray}
for some constant $C\in\bbr_+$ that does not depend on $t,x,y$. Thus, to establish joint continuity in $(t,x)$, it remains to show that $t\mapsto V(t,x)$ is continuous. 
Let $s\le t$. One has
\beao
V(t,x) - V(s,x)
\le \sup_{\tau\in\mathcal{T}_{[0,T-t]}}\mathbb{E}\left[G(t+\tau,X_{t+\tau}^{t,x})-G(s+\tau,X_{t+\tau}^{t,x})\right]\le 0,
\eeao
where for the first inequality, we use that the processes $(X^{t,x}_{t+u})_{u\in[0,T-t]}$ and $(X^{s,x}_{s+u})_{u\in[0,T-t]}$ coincide in distribution.

To obtain an estimation in the other direction, one also has to find an upper bound for the increments of the payoff process~$u\mapsto G(u,X^{s,x}_u)$ between $s+T-t$ and $T$ because
the remaining time to maturity is smaller for the problem started in $t$. By the monotonicity of $x\mapsto F(u,x)$, one has
\beao
V(s,x) - V(t,x) & \le & \sup_{\tau\in\mathcal{T}_{[0,T-t]}}\mathbb{E}\left[G(s+\tau,X_{t+\tau}^{s,x})-G(t+\tau,X_{t+\tau}^{s,x})\right]\\
& & \quad  + \mathbb{E}\left[\int_{T-(t-s)}^T F\left(u,\sup_{v\in[s,T]} X_v^{s,x}\right)\vee 0\,du\right].
\eeao
The first term can be estimated by
\beao
\left(e^{-r(1-\alpha)s}-e^{-r(1-\alpha)t}\right) e^{r(1-\alpha)T}\mathbb{E}\left[(1-\alpha)e^{|\mu| T}\sup_{v\in[0,T]}x\exp\left(\sigma B_v - \frac12 \sigma^2 v\right) +\alpha P_0\right]
\eeao
and the second term by
\beao
(t-s)\left( C_1 + C_2 \mathbb{E}\left[\sup_{0\leq v\leq T}x\exp\left(\sigma B_v - \frac12 \sigma^2 v\right)\right]\right)
\eeao
for some constants $C_1,C_2\in\bbr_+$. Altogether, it follows that $t\mapsto V(t,x)$ is continuous for any fixed $x\in\mathbb{R}_+$.

Then, by Theorem~2.4 of \cite{peskir}, it follows that (\ref{eq:ost}) maximizes (\ref{eq:osp}).\exit 

\subsection{Proof of Theorem \ref{theo:main2}}
We distinguish three cases. The first two cases are trivial, whereas the third case is the interesting one, where we show the existence of a continuous, increasing, positive 
boundary such that the stock is sold at the first time its price is smaller or equal to this boundary.\hfill\vspace{0.5cm}\\
{\em{Case 1:}} $\mu\leq(1-\alpha)r$ \vspace{0.3cm}\\
>From (\ref{eq:payoffH}), we see that $F\leq0$. Then,
from (\ref{eq:osp_new}), it follows that $\tau_{t,x}=0$ for all $(t,x)\times[0,T]\times\bbr_+$, i.e., the investor sells the stock immediately and
invests the proceeds in the bank account.\hfill\vspace{0.5cm}\\
{\em{Case 2:}} {$\mu>(1-\alpha)r$} and $\alpha=0$ \vspace{0.3cm}\\
One has $F(t,x)>0$ for all $(t,x)\in[0,T]\times\mathbb{R}_+\setminus\{0\}$. Then, again from (\ref{eq:osp_new}), it follows that 
$\tau_{t,x}=T-t$ for all $(t,x)\in[0,T]\times\mathbb{R}_+\setminus\{0\}$, 
i.e., the investor never sells the stock prematurely and thus $\mathcal{S}=[0,T)\times\{0\}$.\hfill\vspace{0.5cm}\\
{\em{Case 3:}} {$\mu>(1-\alpha)r$} and $\alpha>0$ \vspace{0.3cm}\\
%
{\em Step 1:} Let us show that for every $t\in[0,T], x,y\in\bbr_+$ with $x\le y$, the implication
\beam\label{28.11.2014.1}
V(t,y)=G(t,y)\quad\implies\quad V(t,x)=G(t,x)
\eeam
holds. Together with the closedness of the stopping region, (\ref{28.11.2014.1}) implies that $\mathcal{S}$ is of the form given in (\ref{eq:boundary}) with 
boundary~$b(t):=\inf\{x\in\bbr_+\ |\ V(t,x)>G(t,x)\}$.

By (\ref{eq:osp_new}), for any $t\in[0,T)$ and $x\le y$, one has
{\allowdisplaybreaks{\begin{align}\label{eq:decreasing}
&\big(V(t,y)-G(t,y)\big)-\big(V(t,x)-G(t,x)\big)\notag\\
\geq&\mathbb{E}\left[\int_0^{\tau_{t,x}}F(t+u,X_{t+u}^{t,y})du\right]-\mathbb{E}\left[\int_0^{\tau_{t,x}}F(t+u,X_{t+u}^{t,x})du\right]\\
=&\mathbb{E}\left[\int_0^{\tau_{t,x}}F\left(t+u,\frac y x X_{t+u}^{t,x}\right)-F\left(t+u,X_{t+u}^{t,x}\right)du\right]\notag\\
\geq& 0 \notag,
\end{align}
}}where the last inequality holds by (\ref{fxgeq0}). By $V\ge G$, this implies (\ref{28.11.2014.1}).\\

{\em Step 2:} Let us now show that $t\mapsto b(t)$ in increasing. For $x\in\bbr_+$ and $s\leq t$, one has
{\allowdisplaybreaks{\begin{align}\label{ineq:UsUt}
 V(t,x)-G(t,x)=&e^{-r(1-\alpha)(t-s)}\sup_{\tau\in\mathcal{T}_{[0,T-t]}}\mathbb{E}\left[\int_0^\tau F(s+u,X_{t+u}^{t,x})du\right]\nonumber\\
       =&e^{-r(1-\alpha)(t-s)}\sup_\tau\mathbb{E}\left[\int_0^\tau F(s+u,X_{s+u}^{s,x})du\right]\\
    \leq&e^{-r(1-\alpha)(t-s)}\sup_{\tau\in\mathcal{T}_{[0,T-s]}}\mathbb{E}\left[\int_0^\tau F(s+u,X_{s+u}^{s,x})du\right],\nonumber\\
       =&e^{-r(1-\alpha)(t-s)}\left(V(s,x)-G(s,x)\right),\nonumber
\end{align}
}}
where the second supremum is taken over all $(\mathcal{F}_{s+u})_{u\in [0,T-t]}$--stopping times taking values in $[0,T-t]$. Since $V-G\ge 0$, (\ref{ineq:UsUt}) yields the implication
\begin{align}\label{imp:UsUt}
V(s,x)-G(s,x)=0\Rightarrow V(t,x)-G(t,x)=0,
\end{align}
which induces that $t\mapsto b(t)$ is increasing.\\

{\em Step 3:} Let us show that $b(t)>0$ for all $t\in[0,T)$. At first, suppose there exists $t^*\in(0,T)$ such that $b(t^*)=0$ (i.e., $t^*\not=0$). As $t\mapsto
b(t)$ is increasing, one has that $b(u)=0$ for all $u\in[0,t^*]$. 
As $X^{0,x}$ cannot reach $0$, we have $\tau_{0,x}\ge t^*$ for all $x>0$, which yields
\beao
& & \mathbb{E}\left[\int_0^{\tau_{0,x}}F(u,X_{u}^{0,x})du\right]\nonumber\\
& & \leq \mathbb{E}\left[\int_0^{t^*}F\left(u,X_{u}^{0,x}\right)1_{\{X_{u}^{0,x}\le f/2\}}du\right]
+\mathbb{E}\left[\int_0^{T} 0\vee F\left(u, X_u^{0,x}\right)1_{\{X_{u}^{0,x}>f/2\}}du\right]\nonumber\\
& & \leq t^* \mathbb{P}\left(\sup_{0\le u\le T} X_u^{0,x}\leq f/2\right)F(T,f/2) + Te^{rT}[\mu-r(1-\alpha)]\mathbb{E}\left[\sup_{0\le u\le T} X_u^{0,x}\right].
\eeao
For the first inequality, we use that $F(u,\wt{x})<0$ for $\wt{x}\le f/2$. For the second one, we use that $F$ is increasing in $x$ and for $x\le f$, increasing in $t$. 

We have that $F(T,f/2)<0$. In addition, $\mathbb{P}\left(\sup_{0\le u\le T} X_u^{0,x}\leq f/2\right)\to 1$ and 
$\mathbb{E}\left(\sup_{0\le u\le T} X_u^{0,x}\right)\to 0$ for $x\to 0$.  This yields that
\beao
\mathbb{E}\left[\int_0^{\tau_{0,x}}F(u,X_{u}^{0,x})du\right]<0\quad\mbox{for $x$ small enough,}
\eeao
which is a contradiction to the optimality of $\tau_{0,x}$. Thus $b(t^*)>0$ for all $t^*\in(0,T)$. $b(0)>0$ follows analogously by extending problem~(\ref{eq:osp}) 
to the interval~$[-1,T]$.\\

Thus, the theorem is now proven, besides the smooth-fit condition, the continuity of the exercise boundary and its terminal condition. These assertions need some more preparation
provided by the following lemmas.

\subsubsection{Continuity of the optimal stopping times}

The following two lemmas show that the optimal stopping
times are close together for stock price processes started in a
neighborhood of state and time.
\begin{lem}\label{lemma:approxtau}
 Let $\tau_{t,x}$ be defined as in (\ref{eq:ost}). 
 Fix $a>0$. Then, for all $\varepsilon>0$, there exists $\tilde \delta>0$ such that $\sup_{x\in[a,\infty)} \mathbb{P}(\left\{\tau_{t,x+\delta}-\tau_{t,x}>\varepsilon\right\})<\varepsilon$ for all $\delta\in(0,\tilde\delta)$.
\end{lem}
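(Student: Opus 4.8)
The plan is to exploit the multiplicative (scaling) structure of the geometric Brownian motion in \eqref{eq:stock} together with the monotonicity of $b$ established in Step~2 of the proof of Theorem~\ref{theo:main2}. Writing $Z_s:=\exp\bigl((\mu-\tfrac12\sigma^2)s+\sigma(B_{t+s}-B_t)\bigr)$, the two price paths started from $x$ and $x+\delta$ at time $t$ satisfy $X^{t,x}_{t+s}=xZ_s$ and $X^{t,x+\delta}_{t+s}=(x+\delta)Z_s$, driven by the \emph{same} noise. In particular $X^{t,x+\delta}_{t+s}>X^{t,x}_{t+s}$ for all $s$, so the higher path can only enter $\mathcal{S}=\{(u,y):y\le b(u)\}$ later, giving $\tau_{t,x+\delta}\ge\tau_{t,x}$ pathwise. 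Hence it suffices to bound the nonnegative excess $\tau_{t,x+\delta}-\tau_{t,x}$ from above.

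Set $\sigma_1:=\tau_{t,x}$. On $\{\sigma_1=T-t\}$ the excess is zero, so I work on $\{\sigma_1<T-t\}$. Since $V-G$ is continuous and nonnegative (Proposition~\ref{prop:continue}), $\mathcal{S}$ is closed, and continuity of the path then forces $X^{t,x}_{t+\sigma_1}\le b(t+\sigma_1)$, i.e.\ $xZ_{\sigma_1}\le b(t+\sigma_1)$. Using that $s\mapsto b(t+s)$ is increasing, for $s\ge\sigma_1$ the inequality $(x+\delta)Z_s\le xZ_{\sigma_1}$ already forces $(x+\delta)Z_s\le b(t+\sigma_1)\le b(t+s)$, so the path lies in $\mathcal{S}$. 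Consequently
\[
\tau_{t,x+\delta}-\sigma_1\ \le\ \inf\Bigl\{u\ge0:\ \tfrac{Z_{\sigma_1+u}}{Z_{\sigma_1}}\le\tfrac{1}{1+\delta/x}\Bigr\}=:\rho_\delta(x).
\]
The crucial cancellation is that the random level $b(t+\sigma_1)$ drops out, leaving only the deterministic ratio $\tfrac{1}{1+\delta/x}$.

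By the strong Markov property, the post-$\sigma_1$ ratio process $u\mapsto Z_{\sigma_1+u}/Z_{\sigma_1}$ is again a geometric Brownian motion started at $1$ and independent of $\mathcal{F}_{t+\sigma_1}$; hence $\rho_\delta(x)$ has the law of the first passage time $T_{c(x)}:=\inf\{u:Y_u\le c(x)\}$ of a generic geometric Brownian motion $Y$ from $1$ to the level $c(x):=\tfrac{1}{1+\delta/x}<1$. For $x\ge a$ one has $c(x)\ge c(a)=\tfrac{1}{1+\delta/a}$, and since $c\mapsto T_c$ is pathwise nonincreasing (a higher, closer-to-$1$ level is reached sooner), $\mathbb{P}(\rho_\delta(x)>\varepsilon)=\mathbb{P}(T_{c(x)}>\varepsilon)\le\mathbb{P}(T_{c(a)}>\varepsilon)$ for every $x\ge a$. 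It therefore remains to show $\mathbb{P}(T_{c(a)}>\varepsilon)\to0$ as $\delta\downarrow0$: writing the passage via the drifted Brownian motion $(\mu-\tfrac12\sigma^2)u+\sigma W_u$ crossing $\log c(a)\uparrow0^-$, the a.s.\ fact that this process becomes strictly negative in every interval $(0,\eta]$ yields $T_{c(a)}\downarrow0$ a.s., whence the probability tends to $0$ by monotone convergence. Choosing $\tilde\delta$ so small that this probability is below $\varepsilon$ for all $\delta\in(0,\tilde\delta)$ finishes the argument.

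The step I expect to be the main obstacle is securing \emph{uniformity} in $x\in[a,\infty)$ rather than mere pointwise convergence of the stopping times; this is precisely what the bound $c(x)\ge c(a)$ delivers, and it is the scaling property of geometric Brownian motion (which makes $b(t+\sigma_1)$ cancel) that makes this uniform comparison available. A secondary technical point requiring care is justifying $X^{t,x}_{t+\sigma_1}\le b(t+\sigma_1)$ at the passage time, which rests on closedness of $\mathcal{S}$ and path continuity.
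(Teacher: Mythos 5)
Your proof is correct and follows essentially the same route as the paper's: both exploit the scaling $X^{t,y}=yX^{t,1}$, the monotonicity of $b$, and the fact that $X^{t,x}_{t+\tau_{t,x}}\le b(t+\tau_{t,x})$ to reduce the event to a geometric Brownian motion started afresh (strong Markov property) staying above $x/(x+\delta)\ge a/(a+\delta)$ for time $\varepsilon$, which gives the uniformity in $x\in[a,\infty)$. The only (immaterial) difference is at the last step, where the paper evaluates this probability explicitly via the reflection-principle formula involving $\Phi$, while you argue qualitatively through a.s.\ immediate downcrossings of the driving Brownian motion and monotone convergence.
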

\begin{proof}
For all $\varepsilon,\delta>0$, one has
\begin{align*}
\mathbb{P}\left(\left\{\tau_{t,x+\delta}-\tau_{t,x}>\varepsilon\right\}\right)=\mathbb{P}\left(\left\{\tau_{t,x+\delta}-\tau_{t,x}>\varepsilon\right\}\cap
\left\{\tau_{t,x}\leq T-t-\varepsilon\right\}\right).
\end{align*}  
To compare the optimal stopping times for different state variables, we use that $X^{t,y}=yX^{t,1}$ for all $y\in\bbr_+$. Let $(B_u)_{u\geq0}$ be a standard 
$\mathbb{P}-$Brownian motion. One gets
\begin{align}\label{torewrite1}
&\mathbb{P}\left(\left\{\tau_{t,x+\delta}-\tau_{t,x}>\varepsilon\right\}\cap \left\{\tau_{t,x}\leq T-t-\varepsilon\right\}\right)\nonumber\\
\leq&\mathbb{P}\left(\left\{\min_{u\in[0,\varepsilon]}X^{t,x+\delta}_{t+\tau_{t,x}+u}>b(t+\tau_{t,x})\right\}\cap \left\{\tau_{t,x}\leq T-t-\varepsilon\right\}\right)\nonumber\\
\leq& \mathbb{P}\left(\left\{\min_{u\in[0,\varepsilon]} (x+\delta)X^{t,1}_{t+\tau_{t,x}+u}> xX^{t,1}_{t+\tau_{t,x}}\right\}\cap \left\{\tau_{t,x}\leq T-t-\varepsilon\right\}\right)\nonumber\\
\leq& P\left(\left\{\min_{u\in[0,\varepsilon]}\left\{\exp\left(\left(\mu-\frac{\sigma^2}2\right)u+\sigma B_u\right)\right\}>\frac{x}{x+\delta}\right\}\right)\nonumber\\
\leq&\Phi \left(\frac{-\ln\left(\frac{a}{a+\delta}\right)+(\mu-\sigma^2/2)\varepsilon}{\sigma\sqrt\varepsilon}\right)\nonumber\\
&-e^{2(\mu-\sigma^2/2)\ln\left(\frac{a}{a+\delta}\right)\sigma^{-2}}\Phi \left(\frac{\ln\left(\frac{a}{a+\delta}\right)+(\mu-\sigma^2/2)\varepsilon}{\sigma\sqrt\varepsilon}\right)\quad\forall x\in[a,\infty),
\end{align}
where the first inequality holds because $t\mapsto b(t)$ is increasing, the second one holds by  $x X^{t,1}_{t+\tau_{t,x}} \le b(t+\tau_{t,x})$, and the third one follows by the strong Markov property 
of $X^{t,1}$. For $\varepsilon>0$ fixed, the right-hand side of (\ref{torewrite1}) converges to $0$ for $\delta\downarrow 0$. \exit
\end{proof}
\begin{lem}\label{lemma:approxtau2}
Let $\tau_{t,x}$ be defined as in (\ref{eq:ost}).
Fix $a,b\in\mathbb{R}_+$ with $0<a<b$. Then, for all $\varepsilon>0$, there exists $\delta>0$ such that 
$\sup_{x\in[a,b]}\mathbb{P}(\left\{\left|\tau_{t_2,x}-\tau_{t_1,x}\right|>\varepsilon\right\})<\varepsilon$ for all $t_1,t_2$ with $0\leq t_1<t_2\leq T$ and $t_2-t_1<\delta$.
\end{lem}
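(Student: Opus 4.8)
The plan is to prove Lemma~\ref{lemma:approxtau2}, which asserts a uniform-in-$x$ (on a compact interval $[a,b]$) continuity-in-time property of the optimal stopping times $\tau_{t,x}$. The lemma controls $\mathbb{P}(|\tau_{t_2,x}-\tau_{t_1,x}|>\varepsilon)$ for $t_2-t_1$ small. Since the time variable enters $\tau_{t,x}$ through the hitting structure $\inf\{s\ |\ X^{t,x}_{t+s}\le b(t+s)\}\wedge(T-t)$, the two sources of difference are the different remaining horizons ($T-t_1$ versus $T-t_2$) and the fact that the boundary $b$ is evaluated along a shifted time axis. The guiding idea is the same as in Lemma~\ref{lemma:approxtau}: use the scaling $X^{t,y}=yX^{t,1}$ together with the monotonicity of $b$ (Step~2 of the proof of Theorem~\ref{theo:main2}) to reduce the event $\{|\tau_{t_2,x}-\tau_{t_1,x}|>\varepsilon\}$ to an event about the geometric Brownian motion $X^{t,1}$ staying strictly above or dropping strictly below its starting level over a window of length $\varepsilon$, and then bound that probability uniformly in $x\in[a,b]$.

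\medskip

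\noindent I would split the estimate into the two one-sided events $\{\tau_{t_2,x}-\tau_{t_1,x}>\varepsilon\}$ and $\{\tau_{t_1,x}-\tau_{t_2,x}>\varepsilon\}$ and handle each separately. For the first, I would intersect with $\{\tau_{t_1,x}\le T-t_2-\varepsilon\}$ (so that the horizon truncation does not artificially create the gap), and argue that if $\tau_{t_2,x}$ exceeds $\tau_{t_1,x}$ by more than $\varepsilon$, then the process started at $t_2$ must stay above the boundary throughout a window of length $\varepsilon$ after the point where the process started at $t_1$ has already hit. Because $b$ is increasing, the boundary faced over the shifted window can be controlled by its value at a single time point, and, exactly as in the chain of inequalities (\ref{torewrite1}), this reduces to estimating $\mathbb{P}(\min_{u\in[0,\varepsilon]}\exp((\mu-\sigma^2/2)u+\sigma B_u)>c)$ for an explicit constant $c$ depending on $a,b$ and $t_2-t_1$ through the ratio of starting levels. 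The second one-sided event requires comparing hitting of the later-started (shorter-horizon) process against the earlier-started one and, by the same monotonicity and scaling, is bounded by the probability that the geometric Brownian motion drops below a level close to~$1$ within time~$\varepsilon$, which again admits a closed-form reflection-principle estimate via the function~$\Phi$.

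\medskip

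\noindent The main obstacle is that, unlike Lemma~\ref{lemma:approxtau} where only the initial state changes while the time origin is fixed, here the shift in $t$ moves \emph{both} the boundary evaluation point and the terminal horizon simultaneously, so the two processes $X^{t_1,x}$ and $X^{t_2,x}$ are naturally coupled through the same driving Brownian increments only after a time shift. The careful part is choosing the right coupling and the right single reference time at which to freeze the increasing boundary $b$, so that monotonicity gives a clean one-sided comparison. I expect that once the event is correctly reduced to a small-window fluctuation event for a single geometric Brownian motion, the uniformity over $x\in[a,b]$ comes for free: the reduced probability depends on $x$ only through the ratio $x/(x')$ of nearby starting levels, which over the compact interval $[a,b]$ is uniformly close to~$1$ as $t_2-t_1\downarrow 0$. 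Continuity of $\Phi$ and of the explicit bound then lets me choose $\delta>0$ making the total probability less than $\varepsilon$ uniformly in $x$, completing the proof.
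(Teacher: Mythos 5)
Your proposal is correct and takes essentially the same route as the paper's proof: you split into the two one-sided events, couple $X^{t_1,x}$ and $X^{t_2,x}$ through the same driving Brownian path (the paper's identity (\ref{28.12.2014.1})), freeze the increasing boundary at the hitting time of one of the two processes, and reduce everything to a small-window fluctuation estimate for a single geometric Brownian motion whose bound depends on $x\in[a,b]$ only through ratios such as $a/(a+\delta)$ and $(b+\delta)/b$, with the randomness of $X^{t_1,x}_{t_2}/x$ split off as a separate small-probability term exactly as in (\ref{deltamove}). Two small blemishes, neither fatal: in your second event the phrase ``drops below a level close to $1$'' should read ``stays above such a level'' (that is the event whose probability is small), and the window ``after the point where the process started at $t_1$ has already hit'' only exists on the set $\{t_1+\tau_{t_1,x}\ge t_2\}$ --- on the complement (which the paper likewise dispatches in one sentence) one needs the extra observation that hitting before time $t_2$ forces $x$ to be close to $b(t_2)$ up to an event of vanishing probability, and starting close to the boundary makes $\tau_{t_2,x}\le\varepsilon$ with high probability.
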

\begin{proof}
Similarly to the proof of Lemma~\ref{lemma:approxtau}, for all $t_2\in(0,T]$, $\varepsilon>0$, and $u\in(0,\varepsilon)$, one has 
\begin{align*}
\mathbb{P}\left(\left\{\tau_{t_2-u,x}-\tau_{t_2,x}>\varepsilon\right\}\right)=\mathbb{P}\left(\left\{\tau_{t_2-u,x}-\tau_{t_2,x}>\varepsilon\right\}\cap
\left\{\tau_{t_2,x}\leq T-t_2+u-\varepsilon\right\}\right).
\end{align*}
To compare the optimal stopping times, we write $X^{t_2,x}$ in terms of the process $X^{t_1,x}$. Namely, by construction, one has 
\beam\label{28.12.2014.1}
X_{t_2+u}^{t_2,x}=x\frac{X^{t_1,x}_{t_2+u}}{X_{t_2}^{t_1,x}},\quad u\ge 0,
\eeam
where $t_2\ge t_1$.
In addition, the following argument uses the fact that $X_{t_2}^{t_1,x}\approx x$ for $t_2-t_1$ small.
Consequently, at the first time $X^{t_2,x}$ hits the boundary, the process~$X^{t_1,x}$ is not far away, and one can argue as in Lemma~\ref{lemma:approxtau}.  
Let $(B_u)_{u\geq0}$ be a standard $\mathbb{P}-$Brownian motion.  
Then, for $t_2-t_1\in(0,\varepsilon/2)$ and $\delta>0$, one gets
\begin{align}\label{tau_ungleichung_prep}
&\mathbb{P}\left(\left\{\tau_{t_1,x}-\tau_{t_2,x}>\varepsilon\right\}\cap \left\{\tau_{t_2,x}\leq T-t_1-\varepsilon\right\}\right)\nonumber\\
\leq & \mathbb{P}\left(\left\{\min_{u\in[0,\varepsilon-(t_2-t_1)]}X^{t_1,x}_{t_2+\tau_{t_2,x}+u} > b(t_2+\tau_{t_2,x})\right\}\cap \left\{\tau_{t_2,x}\leq T-t_1-\varepsilon\right\}\right)\nonumber\\
\leq& \mathbb{P}\left(\left\{\min_{u\in[0,\varepsilon-(t_2-t_1)]}\frac{X^{t_1,x}_{t_2+\tau_{t_2,x}+u}}{X^{t_1,x}_{t_2+\tau_{t_2,x}}} > \frac{x}{X^{t_1,x}_{t_2}}\right\}\cap \left\{\tau_{t_2,x}\leq T-t_1-\varepsilon\right\}\right)\nonumber\\
\leq & \mathbb{P}\left(\left\{\min_{u\in[0,\varepsilon/2]}\frac{X^{t_1,x}_{t_2+\tau_{t_2,x}+u}}{X^{t_1,x}_{t_2+\tau_{t_2,x}}} 
> \frac{x}{x+\delta}\right\}\cap \left\{\tau_{t_2,x}\leq T-t_1-\varepsilon\right\}\right)\nonumber\\
&+\mathbb{P}\left(\left\{X^{t_1,x}_{t_2}>x+\delta\right\}\right)\nonumber\\
\leq & \mathbb{P}\left(\left\{\min_{u\in[0,\varepsilon/2]}\left\{\exp\left(\left(\mu-\frac{\sigma^2}2\right)u+ \sigma B_u\right)\right\} > \frac{a}{a+\delta}\right\}\right)\nonumber\\
&+\mathbb{P}\left(\left\{{\exp\left(\left(\mu-\frac{\sigma^2}2\right)(t_2-t_1)+ \sigma B_{t_2-t_1}\right)>\frac{b+\delta}{b}}\right\}\right)\quad\forall x\in[a,b],
\end{align}
where the first inequality holds because $t\mapsto b(t)$ is increasing, the second one holds by (\ref{28.12.2014.1}) and $X^{t_2,x}_{t_2+\tau_{t_2,x}} \le b(t_2+\tau_{t_2,x})$, 
and for the fourth one, we use the strong Markov property of $X^{t_1,x}$.

As in Lemma \ref{lemma:approxtau}, one can choose $\delta>0$ small enough such that
\begin{align}\label{second_inequality}
\mathbb{P}\left(\min_{u\in[0,\varepsilon/2]}\left\{\exp\left(\left(\mu-\frac{\sigma^2}2\right)u+ \sigma B_u\right)\right\} >
\frac{a}{a+\delta}\right)<\frac\varepsilon2.
\end{align}
Because $B_{t_2-t_1}$ converges stochastically to $0$ for ${t_2-t_1}\downarrow0$, there exists $\delta>0$ such that
\begin{align}\label{deltamove}
\mathbb{P}\left(\left\{\exp\left(\left(\mu-\frac{\sigma^2}2\right)(t_2-t_1)+ \sigma B_{t_2-t_1}\right)>\frac{b+\delta}{b}\right\}\right)<\frac\varepsilon2
\end{align}
for all $t_1, t_2$ with $t_2-t_1<\delta$. So, it remains to show 
\begin{align}\label{backdirection}
\mathbb{P}\left(\tau_{t_2,x}-\tau_{t_1,x}>\varepsilon\right)<\varepsilon,\quad\mbox{for $t_2\ge t_1$ and $t_2-t_1$\ small enough}.
\end{align}
To estimate this probability, we renew $X$ at $t_1+\tau_{t_1,x}$, where it is sufficient to consider the set $\{t_1+\tau_{t_1,x}\ge t_2\}$. Then, we estimate $X^{t_1,x}_{t_2}$ from below
and not from above as in (\ref{tau_ungleichung_prep}). But, the calculations are completely analog to the estimation of $\mathbb{P}\left(\tau_{t_1,x}-\tau_{t_2,x}>\varepsilon\right)$, 
and we are done.
\exit
\end{proof}

\subsubsection{Smooth-fit condition}

Next, we show (\ref{eq:smoothfit}), i.e., the value function $V$ joints the payoff
function $G$ smoothly at the boundary. For $t\in[0,T)$ and $x=b(t)$, one has

\beam\label{xderivapprox:2}
 \frac{V(t,x+\varepsilon)-V(t,x)}{\varepsilon}\geq\frac{G(t,x+\varepsilon)-G(t,x)}{\varepsilon}& = & \partial_x G(t,x)\\
 & = & (1-\alpha)e^{r(1-\alpha)(T-t)}\nonumber
\eeam
for all $\eps>0$. On the other hand, one has
\beam\label{11.12.2014.1}
V(t,x+\varepsilon)-V(t,x) & \leq & \mathbb{E}\left[G\left(t+\tau_{t,x+\varepsilon},X_{t+\tau_{t,x+\varepsilon}}^{t,x+\varepsilon}\right)\right]
-\mathbb{E}\left[G\left(t+\tau_{t,x+\varepsilon},X_{t+\tau_{t,x+\varepsilon}}^{t,x}\right)\right]\nonumber\\
& = & \varepsilon \mathbb{E}\left[X^{t,1}_{t+\tau_{t,x+\varepsilon}}\partial_x G\left(t+\tau_{t,x+\varepsilon},x\right)\right]\nonumber\\
& \leq &\varepsilon \partial_x G(t,x)\mathbb{E}\left[X^{t,1}_{t+\tau_{t,x+\varepsilon}}\right],
\eeam
where the equality holds by the affine linearity of $G$ in $x$, and the second inequality holds as $t\mapsto \partial_x G(t,x)$ is decreasing.
By Lemma~\ref{lemma:approxtau} and uniform integrability, $\mathbb{E}\left[X^{t,1}_{t+\tau_{t,x+\varepsilon}}\right]$ converges to $1$ for $\eps\downarrow 0$. Thus, 
(\ref{xderivapprox:2}) and (\ref{11.12.2014.1}) establish the smooth-fit condition (\ref{eq:smoothfit}).  

\subsubsection{Continuity of the boundary}

\begin{prop}\label{prop:right-continuous}
The boundary $t\mapsto b(t)$ is right-continuous on $[0,T)$.
\end{prop}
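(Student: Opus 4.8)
The plan is to obtain right-continuity as a soft consequence of two facts already in hand: the \emph{closedness} of the stopping region and the \emph{monotonicity} of $b$. Since $V$ and $G$ are continuous by Proposition~\ref{prop:continue}, the set $\mathcal{S}=\{(t,x)\in[0,T)\times\mathbb{R}_+\mid V(t,x)=G(t,x)\}$ is (relatively) closed. This is really all the structure needed, so the continuity-of-stopping-time estimates in Lemmas~\ref{lemma:approxtau} and~\ref{lemma:approxtau2} will \emph{not} be invoked here; I expect them to be reserved for the harder left-continuity statement, which this proposition deliberately does not claim.

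Concretely, I would argue in two inequalities. First, $b$ is increasing (Step~2 of the proof of Theorem~\ref{theo:main2}) and bounded above by $f$, because Remark~\ref{rem:conti_region} shows $\{x>f\}\subseteq\mathcal{C}$, whence $\mathcal{S}\subseteq\{x\le f\}$ and so $b(t)\le f$ for every $t$. Consequently $b(t+):=\lim_{s\downarrow t}b(s)=\inf_{s>t}b(s)$ exists, is finite, and satisfies $b(t+)\ge b(t)$. For the reverse inequality I first record that the graph of the boundary lies in $\mathcal{S}$: since $V(t,x)=G(t,x)$ for all $x<b(t)$, continuity of $V-G$ gives $V(t,b(t))=G(t,b(t))$, i.e.\ $(t,b(t))\in\mathcal{S}$. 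Now pick any sequence $s_n\downarrow t$ with $s_n\in(t,T)$; then $(s_n,b(s_n))\in\mathcal{S}$ and $(s_n,b(s_n))\to(t,b(t+))$, so closedness of $\mathcal{S}$ forces $(t,b(t+))\in\mathcal{S}$. This means $b(t+)$ is a stopping point at time $t$, hence $b(t+)\le b(t)$. Combining the two inequalities yields $b(t+)=b(t)$, which is exactly right-continuity at $t\in[0,T)$.

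I do not anticipate a genuine obstacle in this particular statement; the entire argument is topological. The only points demanding a line of justification are (i) that every boundary point $(t,b(t))$ belongs to the closed set $\mathcal{S}$, which is immediate from continuity, and (ii) the a priori boundedness $b\le f$, needed so that the limiting point $(t,b(t+))$ is a finite point of the domain $[0,T)\times\mathbb{R}_+$ on which $\mathcal{S}$ is closed. The real difficulty in the continuity analysis — and the place where the stopping-time approximation lemmas are genuinely used — lies in proving left-continuity of $b$, which is the complementary assertion not covered by this proposition.
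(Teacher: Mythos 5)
Your proposal is correct and follows essentially the same route as the paper: monotonicity of $b$ gives existence of $b(t+)$, continuity of $V$ and $G$ (equivalently, closedness of $\mathcal{S}$) applied to the boundary points $(s_n,b(s_n))\in\mathcal{S}$ forces $(t,b(t+))\in\mathcal{S}$, hence $b(t+)\le b(t)$, and monotonicity gives the reverse inequality. Your extra remarks (boundary points lie in $\mathcal{S}$; $b\le f$ ensures finiteness) are correct refinements of details the paper leaves implicit, and you are right that the approximation Lemmas~\ref{lemma:approxtau} and~\ref{lemma:approxtau2} are not needed here.
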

\begin{proof}
Fix $t\in[0,T)$ and consider a sequence $t_n\downarrow t$ for $n\rightarrow\infty$. As $t\mapsto b(t)$ is increasing, $b(t+):=\lim_{s\downarrow t} b(s)$ exists. 
Since $(t_n,b(t_n))\in\mathcal{S}$ for all $n\geq1$, and $V$ and $G$ are continuous (Proposition \ref{prop:continue}), one gets that 
$V(t,b(t+))=G(t,b(t+))$, i.e., $(t,b(t+))\in\mathcal S$. This results $b(t+)\leq b(t)$. 
As $t\mapsto b(t)$ is increasing on $[0,T)$, the  claim is proved.\exit
\end{proof}
\begin{prop}\label{prop:left-continuous}
The boundary $t\mapsto b(t)$ is left-continuous on $(0,T)$.
\end{prop}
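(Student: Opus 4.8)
The plan is to argue by contradiction, exploiting that a downward jump of the increasing boundary would force an entire rectangle to lie in the continuation region while sitting in the region $\{F<0\}$ directly above a segment of the stopping region. Since $t\mapsto b(t)$ is increasing (Step~2 of the proof of Theorem~\ref{theo:main2}), the left limit $b(t_0-):=\lim_{s\uparrow t_0}b(s)$ exists for every $t_0\in(0,T)$ and satisfies $b(t_0-)\le b(t_0)$; moreover $b(t_0)\le f$ by monotonicity together with the terminal condition $\lim_{t\uparrow T}b(t)=f$. Suppose, contrary to the claim, that $b(t_0-)<b(t_0)$ for some $t_0\in(0,T)$. I would then choose $x_1<x_2$ with $b(t_0-)<x_1<x_2<b(t_0)$ (note $x_2<f$), and for small $\delta\in(0,t_0)$ consider the open rectangle $R:=(t_0-\delta,t_0)\times(x_1,x_2)$. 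Because $b$ is increasing, $b(s)\le b(t_0-)<x_1$ for every $s<t_0$, so each point of $R$ lies in $\mathcal C$, whereas the bottom edge $\{t_0\}\times[x_1,x_2]$ lies in $\mathcal S$ since $x_2<b(t_0)$. On $\overline R$ one has $x<f$, hence $F<0$ by (\ref{fgeq0}), and by continuity and compactness $\sup_{\overline R}F=:-c<0$.

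Writing $U:=V-G\ge0$, the representation (\ref{eq:osp_new}) gives $U(s,x)=\sup_{\tau}\mathbb E[\int_0^\tau F(s+u,X^{s,x}_{s+u})\,du]$. I would fix $x^*\in(x_1,x_2)$, set $s=t_0-h$ for small $h\in(0,\delta)$, and let $\rho$ be the first exit time of the space--time process $(s+u,X^{s,x^*}_{s+u})$ from $R$; since time is nondecreasing, $\rho=\sigma\wedge h$, where $\sigma$ is the spatial exit time of $X^{s,x^*}$ from $(x_1,x_2)$. As $R\subset\mathcal C$, the first-entrance rule is never triggered strictly inside $R$, so the optimal stopping time of Proposition~\ref{prop:continue} satisfies $\tau_{s,x^*}\ge\rho$. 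The strong Markov property together with the optimality of first entrance into $\mathcal S$ for the shifted problem then yields
\[
U(s,x^*)=\mathbb E\Big[\int_0^{\rho}F(s+u,X^{s,x^*}_{s+u})\,du\Big]+\mathbb E\big[U(s+\rho,X^{s,x^*}_{s+\rho})\big].
\]
I emphasize that this step needs no regularity of $V$: it is a purely probabilistic dynamic-programming identity obtained by splitting the optimal reward at $\rho\le\tau_{s,x^*}$.

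Next I would estimate the two terms. On $\{\sigma\ge h\}$ the process leaves $R$ through the bottom edge, so $X^{s,x^*}_{t_0}\in[x_1,x_2]\subset\mathcal S$ and $U(t_0,X^{s,x^*}_{t_0})=0$; on $\{\sigma<h\}$ one merely has $0\le U\le\|U\|_{\infty,\overline R}<\infty$. Hence the second term is bounded by $\|U\|_{\infty,\overline R}\,\mathbb P(\sigma<h)$, while the first is at most $-c\,\mathbb E[\rho]\le -c\,h\,\mathbb P(\sigma\ge h)$. Therefore
\[
0\le U(s,x^*)\le \|U\|_{\infty,\overline R}\,\mathbb P(\sigma<h)-c\,h\,\big(1-\mathbb P(\sigma<h)\big).
\]
Because $x^*$ lies strictly inside $(x_1,x_2)$, a standard small-time estimate for geometric Brownian motion gives $\mathbb P(\sigma<h)\le C\exp(-c'/h)$, so $\mathbb P(\sigma<h)/h\to0$ as $h\downarrow0$. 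Dividing the last display by $h$ and letting $h\downarrow0$ would then give $0\le\limsup_{h\downarrow0}U(s,x^*)/h\le -c<0$, a contradiction. Thus $b(t_0-)=b(t_0)$, and since $t_0\in(0,T)$ was arbitrary, $b$ is left-continuous on $(0,T)$.

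The hard part is conceptual rather than computational: recognizing that a jump of $b$ manufactures a rectangle in $\mathcal C$ that sits inside $\{F<0\}$ just above an $\mathcal S$-segment, and then making rigorous that continuation cannot pay there. The two points requiring care are the justification of the dynamic-programming representation over the exit time $\rho$ (i.e.\ that $\tau_{s,x^*}\ge\rho$, which uses $R\subset\mathcal C$) and the choice $x_2<f$ ensuring $F\le -c<0$ on $\overline R$; once these are in place, the competition between the order-$h$ negative running reward $F$ and the super-polynomially small escape probability $\mathbb P(\sigma<h)$ closes the argument. The small-time exit estimate itself is routine.
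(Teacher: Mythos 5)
Your proof is correct, but it takes a genuinely different route from the paper's. The paper argues analytically: assuming $b(t-)<b(t)$, it first derives upper bounds on $\partial_t V$ and $\partial_x V$ just to the left of $t$ (via transplanted, sub-optimal stopping rules and the stopping-time continuity Lemmas~\ref{lemma:approxtau} and~\ref{lemma:approxtau2}), feeds these into the PDE $\partial_t V+\mu x\partial_x V+\tfrac{\sigma^2}{2}x^2\partial_{xx}V=0$, valid in $\mathcal{C}$ by interior $C^{1,2}$ regularity, to conclude $\partial_{xx}V\ge C>0$ on $(b(s),x^*]$, and then integrates twice from the boundary, using the smooth-fit condition, to contradict $V(t,x^*)=G(t,x^*)$. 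You instead run a purely probabilistic dynamic-programming argument over the exit time $\rho$ from a rectangle $R\subset\mathcal{C}\cap\{F<0\}$ sitting directly above the stopping segment $\{t_0\}\times[x_1,x_2]$. Your key identity $U(s,x^*)=\mathbb{E}\bigl[\int_0^\rho F(s+u,X^{s,x^*}_{s+u})\,du\bigr]+\mathbb{E}\bigl[U(s+\rho,X^{s,x^*}_{s+\rho})\bigr]$ is justified exactly as you indicate: $\rho\le\tau_{s,x^*}$ because $R\subset\mathcal{C}$, and then the strong Markov property plus optimality of the first-entrance time (Proposition~\ref{prop:continue}, equivalently the martingale property of the value process up to $\tau_{s,x^*}$ from Theorem~2.4 of \cite{peskir}) gives equality. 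The competition between the order-$h$ negative running reward (since $F\le -c<0$ on $\overline R$) and the $O(e^{-c'/h})$ lateral escape probability then contradicts $U\ge 0$. What your route buys: it needs neither smooth fit, nor the $C^{1,2}$/PDE machinery, nor Lemmas~\ref{lemma:approxtau}--\ref{lemma:approxtau2} — only continuity of $V$, monotonicity of $b$, the bound $b\le f$, and a routine small-time exit estimate; this is the style of the known probabilistic proofs of free-boundary continuity (e.g.\ De Angelis). What the paper's route buys: the derivative bounds of its Steps~1--2, and an argument that it then recycles verbatim for the terminal condition $b(T-)=f$ (though your rectangle argument adapts there equally well, with bottom edge $\{T\}\times[x_1,x_2]$ and $U(T,\cdot)=0$ from (\ref{eq:BC})).

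One point to repair: you justify $b(t_0)\le f$ by monotonicity together with the terminal condition $\lim_{t\uparrow T}b(t)=f$. In the paper's logical order this is circular, since the terminal condition is itself proved by the left-continuity argument. Cite Remark~\ref{rem:conti_region} instead: every $(t,x)$ with $x>f$ lies in $\mathcal{C}$, hence $b(t)\le f$ for all $t$ — which is precisely what the paper's Step~3 invokes. With that substitution your proof is complete.
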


Firstly, note that by monotonicity, the limit $b(t-):=\lim_{s\uparrow t}b(s)$ exists and $b(t-)\le b(t)$. The proof of the proposition is divided into three steps. 
Under the assumption that a jump of the boundary occurs at some time $t$, in the first two steps, we find an upper bound for the $t$-derivative 
and the $x$-derivative of $V$ for points in time in a left neighborhood of  $t$ and prices between $b(t-)$ and $b(t)$.
Then, we use these upper bounds to argue with the PDE which is satisfied by $V$ in $\mathcal{C}$ to see that $V_{xx}>0$
is bounded away from~$0$. Roughly speaking, the contribution of $V_{xx}$ to the PDE is bounded away from zero by minus the drift rate~$F$ of the payoff function (cf. Step~3 of the
proof). In a neighborhood of the stopping region, this drift rate is strictly negative. 
Since $\{t\}\times[b(t-),b(t)]$ is part of the stopping region, where $V=G$, this turns out to 
be a contradiction to the linearity of $G$ in $x$ if $b(t)>b(t-)$.

This line of argument has already been applied to quite diverse payoff functions, see \cite{peskir}.

\begin{proof}
Suppose that the stopping boundary $b$ has a jump at $t$, i.e., $b(t)>b(t-)$.\\

\textit{Step 1 (upper bound for the $t-$derivative):} Let $\delta\in(0,t)$, $\varepsilon\in(0,t-\delta)$, and $x\in(b(t-\delta),b(t)]$. Define the stopping time
\beam\label{9.12.2014.1}
\sigma:=\inf\left\{u\ge 0\ |\ X^{t-\delta-\varepsilon,x}_{t-\delta-\eps+u}\le b(t-\delta+u)\right\}\wedge (T-(t-\delta))\in \mathcal{T}_{[0,T-(t-\delta)]},
\eeam
which applies the optimal stopping rule for the problem started in $t-\delta$ to the problem started in $t-\delta-\varepsilon$.
By construction, $\left(\sigma, X^{t-\delta-\eps,x}_{t-\delta-\eps+\sigma}\right)$ possesses the same distribution as 
$\left(\tau_{t-\delta,x}, X_{t-\delta+\tau_{t-\delta,x}}^{t-\delta,x}\right)$. 
Because $\sigma$ is in general sub-optimal for the problem started in $t-\delta-\eps$, one gets
{\allowdisplaybreaks{\begin{align}\label{t-derivative1}
&\frac{V(t-\delta,x)-V(t-\delta-\eps,x)}{\varepsilon}\nonumber\\
\leq&\frac{1}\varepsilon  \mathbb{E}\left[G(t-\delta+\tau_{t-\delta,x},X_{t-\delta+\tau_{t-\delta,x}}^{t-\delta,x})\right]
-\frac1\varepsilon\mathbb{E}\left[G(t-\delta-\eps+\sigma,X_{t-\delta-\eps+\sigma}^{t-\delta-\eps,x})\right]\nonumber\\
=& \mathbb{E}\left[\left((1-\alpha)X_{t-\delta+\tau_{t-\delta,x}}^{t-\delta,x} 
+\alpha P_0\right)e^{r(1-\alpha)(T-t+\delta-{\tau_{t-\delta,x}})}\right]\frac{\left(1-e^{r(1-\alpha)\varepsilon}\right)}{\varepsilon}.
\end{align}
}}
By the classic theory for parabolic equations, see, e.g., Friedman~\cite{Friedman} (Chapter~3) 
or Shiryaev~\cite{shiryaev2007optimal} (Theorem~15 in Chapter~3), one knows that  $V\in
\mathcal{C}^{1,2}$ in the continuation region. Therefore, $\partial_t V(t-\delta,x)$ exists for all $x>b(t-\delta)$ and
$$(V(t-\delta,x)-V(t-\delta-\eps,x))/{\varepsilon}\rightarrow
\partial_t V(t-\delta,x),\quad \eps\downarrow 0.$$ 
Together with (\ref{t-derivative1}), this implies
\begin{align*}
 \partial_t V(t-\delta,x)\leq -r(1-\alpha)\mathbb{E}\left[\left((1-\alpha) X_{t-\delta+\tau_{t-\delta,x}}^{t-\delta,x}
 +\alpha P_0\right)e^{r(1-\alpha)(T-(t-\delta)-{\tau_{t-\delta,x}})}\right]
\end{align*}

On the other hand, $(t,x)$ lies in the stopping region for all $x\le b(t)$ and thus $\tau_{t,x}=0$.
Since $b(t-)>0$, one can apply Lemma~\ref{lemma:approxtau2}, and $\tau_{t-\delta,x}\to 0$ in probability for $\delta\downarrow0$, where the convergence holds uniformly 
in $x\in[b(t-)/2,b(t)]$. By uniform integrability, one gets
\beam\label{t-deriv:approx}
& & \limsup_{\delta\downarrow0} \sup_{x\in (b(t-\delta),b(t)]}\left[\partial_t V(t-\delta,x) - \partial_t G(t,x)\right]\nonumber\\
& & \leq \sup_{x\in (b(t-),b(t)]}\left[-\left((1-\alpha)x+\alpha P_0\right)e^{r(1-\alpha)(T-t)}r(1-\alpha) -\partial_t G(t,x)\right]\nonumber\\
& & = 0
\eeam
\textit{Step 2 (upper bound for the $x-$derivative):} Let $\delta\in(0,t)$, $x\in(b(t-\delta),b(t)]$, and $\varepsilon\in(0,x-b(t-\delta))$. The arguments are similar to Step~1, 
but more convenient to write down because $\tau_{t-\delta,x}$ is already an admissible stopping time for the problem started 
in $(t-\delta,x-\eps)$ and need not be transformed as in (\ref{9.12.2014.1}).
One gets
\begin{align}\label{x-derivative1}
&\frac{V(t-\delta,x)-V(t-\delta,x-\eps)}{\varepsilon}\nonumber\\
\leq& \frac{1}\varepsilon \left[\mathbb{E}\left[G\left(t-\delta+\tau_{t-\delta,x},X_{t-\delta+\tau_{t-\delta,x}}^{t-\delta,x}\right)\right]
-\mathbb{E}\left[G\left(t-\delta+\tau_{t-\delta,x},X_{t-\delta+\tau_{t-\delta,x}}^{t-\delta,x-\eps}\right)\right]\right]\nonumber\\
=&(1-\alpha)
\mathbb{E}\left[X_{t-\delta+\tau_{t-\delta,x}}^{t-\delta,1}e^{r(1-\alpha)(T-(t-\delta)-\tau_{t-\delta,x})}\right],
\end{align}
Again, by  $V\in \mathcal{C}^{1,2}$ in the continuation region, one gets that $\partial_x V(t-\delta,x)$ exists for all $x>b(t-\delta)$ and
$(V(t-\delta,x)-V(t-\delta-\eps,x))/{\varepsilon}\rightarrow
\partial_x V(t-\delta,x)$ for $\varepsilon\downarrow0$. 
Together with (\ref{x-derivative1}), one gets
\begin{align*}
\partial_x V(t-\delta,x)\leq (1-\alpha)\mathbb{E}\left[X_{t-\delta+\tau_{t-\delta,x}}^{t-\delta,1}e^{r(1-\alpha)(T-(t-\delta)-\tau_{t-\delta,x})}\right].
\end{align*}
Again, by $\tau_{t,x}=0$ and $b(t-)>0$, an application of Lemma~\ref{lemma:approxtau2} yields  
\begin{align}\label{x-deriv:approx}
\limsup_{\delta\downarrow0}\sup_{x\in(b(t-\delta),b(t)]}  \partial_x V(t-\delta,x)\leq (1-\alpha)e^{r(1-\alpha)(T-t)}.
\end{align}
The RHS is $\partial_x G(t,x)$ that does not depend on $x$.\\

\textit{Step 3 (Conclusion of the left-continuity):} Now, we want to lead the assumption $b(t)>b(t-)$ to a contradiction. 

Let $x^*:=(b(t-)+b(t))/2$. By Remark~\ref{rem:conti_region}, one has $b(t)\le f$ and thus, by (\ref{fgeq0}), it follows that $F(t,x^*)<0$. 
By Step~1 and Step~2, there exists $\delta>0$ such that for all $s\in[t-\delta,t)$ and $x\in(b(s),x^*]$,
one has that
\beao
\mu x \partial_x V(s,x) + \partial_t V(s,x) & \le & \mu x \partial_x G(s,x) + \partial_t G(s,x) -\frac{F(t,x^*)}3\\ 
& = & F(s,x) - \frac{F(t,x^*)}3\\
& \le & \frac23 F(t,x^*) - \frac{F(t,x^*)}3\\
& = & \frac{F(t,x^*)}3,
\eeao
where the second inequality holds by the continuity of $F$ and its monotonicity in $x$. Again, by Theorem~15 in Chapter~3 of \cite{shiryaev2007optimal},
we know that the value function $V$ solves the PDE
\begin{equation*}
\partial_t V + \mu x \partial_x V + \frac{\sigma^2}{2}x^2\partial_{xx} V = 0
\end{equation*} 
in the continuation region $\mathcal{C}$. Thus, we have
\beao 
\partial_{xx} V(s,x)\ge -\frac{2F(t,x^*)}{3\sigma^2 x^2} \ge -\frac{2F(t,x^*)}{3\sigma^2 {x^*}^2} =: C >0,\quad \forall s\in[t-\delta,t),\ x\in(b(s),x^*]. 
\eeao
By $V(s,b(s))=G(s,b(s))$, $\partial_x V(s,b(s))=\partial_x G(s,b(s))$ (smooth-fit condition), $\partial_{xx}G=0$, and the Newton-Leibniz
formula, it follows that 
\begin{align}\label{xx-derivative}
 V(s,x^*)-G(s,x^*)=\int_{b(s)}^x\int_{b(s)}^u \partial_{xx}V(s,v)-\partial_{xx}G(s,v)dvdu \ge \frac{C(x^*-b(s))^2}2
\end{align}
As this holds for all $s\in[t-\delta,t)$ and $V-G$ is continuous, one concludes that 
\beao
V(t,x^*)-G(t,x^*) \ge \frac{C(x^*-b(t-))^2}2>0,
\eeao
which is a contradiction to the fact that $(t,x^*)$ lies in the stopping region. So, it can be concluded that $b(t-)=b(t)$,
and the continuity of the boundary is established.
\exit
\end{proof}

\subsubsection*{Terminal Condition of the boundary}

By Remark~\ref{rem:conti_region}, $b(T-)=\lim_{t\uparrow T}b(t)$ cannot exceed the boundary~$f$, above which
the drift rate of the payoff process is positive. It remains to exclude that $b(T-)<f$. But, this is done with the same arguments as in the proof of the
left-continuity of the boundary, using the fact that $V(T,x)=G(T,x)$ for all $x\in [b(T-),f].$

\subsection{Proof of Proposition \ref{2.12.2014.1}}

(i) Let $0\le \sigma_1\le\sigma_2$ and w.l.o.g. $t=0$. For two independent standard Brownian motions $B$ and $\wt{B}$, the process
\beao
X_s=x\exp\left(\mu s + \sigma_1 B_s - \frac{\sigma_1^2}2 s + \sqrt{\sigma_2^2 - \sigma_1^2} \wt{B}_s - \frac{\sigma_2^2 - \sigma_1^2}2 s\right),\quad s\ge 0,
\eeao
possesses the same law as the stock price from (\ref{eq:stock}) with $\sigma=\sigma_2$. In addition, $X$ is Markovian w.r.t. the filtration $(\mathcal{F}^{B,\wt{B}}_s)_{s\in[0,T]}$
which is generated by $B$ and $\wt{B}$.
This implies that $V$ from (\ref{eq:osp_at0}) with standard deviation $\sigma_2$ coincides with the value of the problem
\beam\label{2.12.2014.2}
\sup_\tau \mathbb{E}\left[G(\tau,X_{\tau})\right],
\eeam
where the supremum is taken over all $(\mathcal{F}^{B,\wt{B}}_s)_{s\in[0,T]}$--stopping times~$\tau$. 
Now consider the artificial optimal stopping problem where the second Brownian motion $\wt{B}$ that enters into the stock price is not 
observable to the maximizer. This corresponds to the restriction to $(\mathcal{F}^B_s)_{s\in[0,T]}$--stopping times. Of course, the latter supremum is at least 
as high as the previous one. On the other hand, for any $(\mathcal{F}^B_s)_{s\in[0,T]}$--stopping time~$\tau$, we have
\beao
& & \mathbb{E}\left[G(\tau,X_{\tau})\right]\\ 
& & = \mathbb{E}\Big[\left((1-\alpha)x\exp\left(\mu \tau + \sigma_1 B_\tau - \frac{\sigma_1^2}2 \tau + \sqrt{\sigma_2^2 - \sigma_1^2} \wt{B}_\tau - \frac{(\sigma_2^2 
- \sigma_1^2)\tau}2\right) + \alpha P_0\right)\\ 
& & \quad \times e^{r(1-\alpha)(T-\tau)}\Big]\\
& & = (1-\alpha)\mathbb{E}\Big[e^{r(1-\alpha)(T-\tau)} 
x\exp\left(\mu \tau + \sigma_1 B_\tau - \frac{\sigma_1^2}2 \tau\right)\\
& & \qquad\qquad\qquad \times  E\left(\exp\left(\sqrt{\sigma_2^2 - \sigma_1^2} \wt{B}_\tau - \frac{\sigma_2^2 - \sigma_1^2}2\tau\right)\ |\ \mathcal{F}^B_T\right)\Big]\\
& & + \alpha P_0 \mathbb{E}\left[e^{r(1-\alpha)(T-\tau)}\right]\\
& & = \mathbb{E}\left[(1-\alpha)e^{r(1-\alpha)(T-\tau)} 
x\exp\left(\mu \tau + \sigma_1 B_\tau - \frac{\sigma_1^2}2 \tau\right) + \alpha P_0 e^{r(1-\alpha)(T-\tau)}\right],
\eeao 
where the conditional expectation is $1$ because $\tau$ is $\mathcal{F}^B_T$--measurable whereas $\wt{B}$ is independent of $\mathcal{F}^B_T$. 
It follows that the value of problem~(\ref{2.12.2014.2}) restricted to all $(\mathcal{F}^B_s)_{s\in[0,T]}$--stopping times coincides with $V$ from (\ref{eq:osp_at0}) 
with smaller standard deviation $\sigma_1$. Thus, one has $V^{\sigma_1}\le V^{\sigma_2}$. Note that the only property of the payoff function~$G$ we use is its affine linearity 
in $X_\tau$.\\ 
 
(ii) For $\sigma=0$, the optimal stopping problem (\ref{eq:osp}) reads
$$V(t,x)=\sup_{u\in[0,T-t]}G(t+u,xe^{\mu u})=\sup_{u\in[0,T-t]}\left((1-\alpha)xe^{\mu u}+\alpha P_0\right)e^{r(1-\alpha)(T-t-u)}.$$
By simple algebra, one calculates that $d^2/du^2 G(t+u,xe^{\mu u})>0$. So, we conclude that for fixed $x\in\mathbb{R}_+$, the maximum of $u\mapsto G(t+u,xe^{\mu u})$ 
is either attained at $u=0$ or $u=T-t$. For $b(t)$ given by (\ref{17.12.2014.1}), one has $G(t,b(t))= G(T,b(t) e^{\mu (T-t)})$, i.e., the investor is indifferent between stopping at
$t$ and $T$. It implies that $(t,b(t))$ lies 
in the stopping region. On the other hand, by $\partial_x G(t,x)=(1-\alpha)e^{(1-\alpha)r(T-t)} < e^{\mu(T-t)}\partial_x G(T,x)$ for all $x\in\bbr_+$, one has that $(t,x)$ lies 
in the continuation region for all $x>b(t)$. This implies that $b(t)$ is indeed
the optimal exercise boundary from (\ref{eq:boundary}).

It remains to show that $\tau_{t,x}=T-t$ for all $x>b(t)$. As $\sup_{u\in[0,T-t]}G(t+u,xe^{\mu u})$ is not attained at any $u\in(0,T-t)$, $X^{t,x}$ cannot hit the boundary before $T$
if it starts above it.

\end{document}